\theoremstyle{plain}
\newtheorem{theorem}{Theorem}[section]
\newtheorem{lemma}[theorem]{Lemma}
\theoremstyle{definition}
\newtheorem{mydefn}[theorem]{Definition}
\newtheorem{observation}[theorem]{Observation}
\newcommand{\action}[1]{\stackrel{#1}{\rightarrow}}
\newcommand{\faction}[1]{\stackrel{#1}{\rightarrow}_F}
\newcommand{\actions}[1]{\stackrel{#1}{\Rightarrow}}
\newcommand{\sactions}[1]{\stackrel{#1}{\Rightarrow}|}
\newcommand{\fsactions}[1]{\stackrel{#1}{\Rightarrow}_F|}
\newcommand{\SRS}{\underset{\thicksim}{\sqsubset}_{RS}}
\newcommand{\RS}{\sqsubseteq_{RS}}
\newcommand{\everyR}[1]{{for each $#1 \in \widetilde{#1}$}}
\newcommand{\they}{{L\"{u}ttgen and Vogler}}
\begin{document}

\author[nuaa]{Yan Zhang}

\author[nuaa]{Zhaohui Zhu\corref{cor}}

\author[nanshen]{Jinjin Zhang}

\cortext[cor]{Corresponding author. Email: zhaohui@nuaa.edu.cn, zhaohui.nuaa@gmail.com (Zhaohui Zhu).}

\address[nuaa]{College of Computer Science, Nanjing University of Aeronautics and Astronautics, Nanjing, P.R. China, 210016}
\address[nanshen]{College of Information Science, Nanjing Audit University, \\Nanjing, P.R. China, 211815}

\title{Greatest solutions of equations in $\text{CLL}_R$ and its application\tnoteref{t1}}
\tnotetext[t1]{This work received financial support of the National Natural Science of China (No. 60973045) and Fok Ying-Tung Education Foundation, NSF of the Jiangsu Higher Education Institutions (No. 13KJB520012) and the Fundamental Research Funds for the Central Universities (No. NZ2013306).}
\date{\today}
\begin{abstract}
 This paper explores the process calculus $\text{CLL}_R$ furtherly.
 First, we prove that for any equation $X=_{RS} t_X$ such that $X$ is strongly guarded in $t_X$, $\langle X|X=t_X \rangle$ is the largest solution w.r.t $\RS$.
 Second, we encode a fragment of action-based CTL in $\text{CLL}_R$.
\end{abstract}

\begin{keyword}
  $\text{CLL}_R$ \sep Solution of equations \sep Action-based CTL
\end{keyword}


\maketitle
%

\section{Introduction}

    It is well-known that process algebra and temporal logic take different standpoint for looking at specifications and verifications of reactive and concurrent systems, and offer complementary advantages \cite{Peled01}.
    To take advantage of these two paradigms when designing systems, a few of theories for heterogeneous specifications have been proposed, e.g., \cite{Cleaveland00, Cleaveland02, Graf86,Kurshan94,Luttgen07,Luttgen10,Luttgen11,Olderog}.
    Among them, L{\"u}ttgen and Vogler propose the notion of logic labelled transition system (Logic LTS or LLTS for short), which combines operational and logical styles of specification in one unified framework \cite{Luttgen07,Luttgen10,Luttgen11}.
    In addition to usual process operators (e.g., CSP-style parallel composition, hiding, etc) and logic operators (disjunction and conjunction), some standard temporal logic operators, such as \textquotedblleft always\textquotedblright and \textquotedblleft unless\textquotedblright, are also integrated into this framework \cite{Luttgen11}, which allows ones to freely mix operational and logic operators when designing systems.

   L\"{u}ttgen and Vogler's approach is entirely semantic, and doesn't provide any kind of syntactic calculus.
   Recently, we propose a LLTS-oriented process calculus $\text{CLL}_R$, and establish the uniqueness of solutions of equations in $\text{CLL}_R$ under a certain circumstance \cite{Zhang14}.

 This paper works on $\text{CLL}_R$ furtherly. Our main contributions include:

(1) We will show that, without the assumption that $X$ does ont occur in the scope of any conjunction in $t$, the given equation $X=_{RS}t$ may have more than one consistent solution.
       This answers conjecture in \cite{Zhang14} negatively.
       Under the hypothesis that $X$ is strongly guarded in a given (open) term $t$, it is shown that the recursive process $\langle X|X=t\rangle$ is indeed the greatest (w.r.t $\sqsubseteq_{RS}$) consistent solution of the equation $X=_{RS}t$ whenever consistent solutions exist.

(2) We encode a temporal logic language action-based CTL  \cite{Luttgen11} in $\text{CLL}_R$ so that safety properties could be described directly without resorting to complicated settings \cite{Luttgen11}, which are used to embed temporal logic operators into LLTS.

    The rest of this paper is organized as follows.
    The calculus $\text{CLL}_R$ and its semantics are recalled in the next section.
    Section~3 show that for any given equation $X=_{RS}t$ such that $X$ is strongly guarded in $t$, $\langle X |X =t \rangle$ is the greatest solution w.r.t $\RS$.
    We encode action-based CTL in Section~4.
    The paper is concluded with Section~5, where a brief discussion is given.

\section{Preliminaries}

The purpose of this section is to fix our notation and terminology, and to introduce some concepts that underlie our work in all other parts of the paper.

\subsection{Logic LTS and ready simulation}


Let $Act$ be the set of visible action names ranged over by $a$, $b$, etc., and let $Act_{\tau}$ denote $Act \cup \{\tau\}$ ranged over by $\alpha$ and $\beta$, where $\tau$ represents invisible actions.
A labelled transition system with predicate is a quadruple $(P,Act_{\tau},\rightarrow, F)$, where $P$ is a set of states, $\rightarrow \subseteq P\times Act_{\tau}\times P$ is the transition relation and $F\subseteq P$.

As usual, we write $p \stackrel{\alpha}{\rightarrow}$ (or, $p \not \stackrel{\alpha}{\rightarrow}$) if $\exists q\in P.p\stackrel{\alpha}{\rightarrow}q$ ($\nexists q\in P.p  \stackrel{\alpha}{\rightarrow}q$, resp.).
The ready set $\{\alpha \in Act_{\tau}|p \stackrel{\alpha}{\rightarrow}\}$ of a given state $p$ is denoted by $\mathcal{I}(p)$.
A state $p$ is stable if $p \not\stackrel{\tau}{\rightarrow}$.
We also list some useful decorated transition relations:

$p \stackrel{\alpha}{\rightarrow}_F q$ iff $p \stackrel{\alpha}{\rightarrow} q$ and $p,q\notin F$;

$p \stackrel{\epsilon}{\Rightarrow}q$ iff $p (\stackrel{\tau}{\rightarrow})^* q$, where $(\stackrel{\tau}{\rightarrow})^* $ is the transitive and reflexive closure of $\stackrel{\tau}{\rightarrow}$;

$p \stackrel{\alpha}{\Rightarrow}q$ iff $\exists r,s\in P.p \stackrel{\epsilon}{\Rightarrow} r \stackrel{\alpha}{\rightarrow}s \stackrel{\epsilon}{\Rightarrow} q$;

$p \stackrel{\gamma}{\Rightarrow}|q$ iff $p \stackrel{\gamma}{\Rightarrow}q \not\stackrel{\tau}{\rightarrow}$ with $\gamma \in Act_{\tau}\cup \{\epsilon\}$;

$p\stackrel{\epsilon }{\Rightarrow }_Fq$ iff there exists a sequence of $\tau$-transitions from $p$ to $q$ such that all states along this sequence, including $p$ and $q$, are not in $F$; the decorated transition $p \stackrel{\alpha }{\Rightarrow }_Fq$ may be defined similarly;

$p \stackrel{\gamma}{\Rightarrow}_F|q$ iff $p \stackrel{\gamma}{\Rightarrow}_F q \not\stackrel{\tau}{\rightarrow}$ with $\gamma \in Act_{\tau} \cup \{\epsilon\}$.

 Notice that
the notation $p
\stackrel{\gamma }{\Longrightarrow }\mspace{-8mu}|q$ in \cite{Luttgen10,Luttgen11}
has the same meaning as $p\stackrel{\gamma }{
\Rightarrow }_F|q$ in this paper, while $p\stackrel{\gamma }{\Rightarrow }|q $ in this paper does not involve any requirement on $F$-predicate.

\begin{mydefn}[Logic LTS \cite{Luttgen10}]\label{D:LLTS}
 An LTS $(P,Act_{\tau},\rightarrow,F)$ is an LLTS if, for each $p \in P$,

\noindent\textbf{(LTS1) }$p \in F$ if $\exists\alpha\in \mathcal{I}(p)\forall q\in P(p \stackrel{\alpha}{\rightarrow}q \;\text{implies}\; q\in F)$;

\noindent\textbf{(LTS2)} $p \in F$ if $\nexists q\in P.p \stackrel{\epsilon}{\Rightarrow}_F|q$.

Moreover, an LTS $(P,Act_{\tau},\rightarrow,F)$ is {$\tau$}-pure if, for each $p \in P$, $p\stackrel{\tau}{\rightarrow}$ implies $\nexists a\in Act.\;p\stackrel{a}{\rightarrow}$.
\end{mydefn}

Compared with usual LTSs, it is one distinguishing feature of LLTS that it
involves consideration of inconsistencies.
The main motivation
behind such consideration lies in dealing with inconsistencies caused by
conjunctive composition.
Formally, the predicate $F$ is used to denote the set of all inconsistent states that represent empty behaviour that cannot be implemented \cite{Luttgen11}.
The condition (LTS1) formalizes the backward propagation of inconsistencies, and (LTS2) captures
the intuition that divergence (i.e., infinite sequences of $\tau $%
-transitions) should be viewed as catastrophic.
For more intuitive ideas and motivation about inconsistency, the reader may refer \cite{Luttgen07,Luttgen10}.

A variant of the usual notion of weak ready simulation \cite{Bloom95,Larsen91} is adopted to capture the refinement relation in \cite{Luttgen10,Luttgen11}.
It has been proven that such kind of ready simulation is the largest precongruence w.r.t parallel composition and conjunction which satisfies  the desired property that  an inconsistent specification can only be refined by inconsistent ones (see Theorem 21 in \cite{Luttgen10}).

\begin{mydefn}[Ready simulation on LLTS \cite{Luttgen10}]\label{D:RS}
Let $(P, Act_{\tau}, \rightarrow , F)$ be a  LLTS.
A relation ${\mathcal R} \subseteq P \times P$ is a stable ready simulation relation, if for any $(p,q) \in {\mathcal R}$ and $a \in Act $\\
\textbf{(RS1)} both $p$ and $q$ are stable;\\
\textbf{(RS2)} $p \notin F$ implies $q \notin F$;\\
\textbf{(RS3)} $p \stackrel{a}{\Rightarrow}_F|p'$ implies $\exists q'.q \stackrel{a}{\Rightarrow}_F|q'\; \textrm{and}\;(p',q') \in {\mathcal R}$;\\
\textbf{(RS4)} $p\notin F$ implies ${\mathcal I}(p)={\mathcal I}(q)$.

\noindent We say that $p$ is stable ready simulated by $q$, in symbols $p \underset{\thicksim}{\sqsubset}_{RS} q$, if there exists a stable ready simulation relation $\mathcal R$ with $(p,q) \in {\mathcal R}$.
 Further, $p$ is ready simulated by $q$, written $p\sqsubseteq_{RS}q$, if
 $\forall p'(p\stackrel{\epsilon}{\Rightarrow}_F| p' \;\text{implies}\; \exists q'(q \stackrel{\epsilon}{\Rightarrow}_F| q'\; \text{and}\;p' \underset{\thicksim}{\sqsubset}_{RS} q'))$.
 The kernels of $\underset{\thicksim}{\sqsubset}_{RS}$ and $\sqsubseteq_{RS}$ are denoted by $\approx_{RS}$ and $=_{RS}$ resp..
 It is easy to see that $\underset{\thicksim}{\sqsubset}_{RS}$ itself is a stable ready simulation relation and both $\underset{\thicksim}{\sqsubset}_{RS}$ and $\sqsubseteq_{RS}$ are pre-order.
\end{mydefn}

\subsection{The calculus $\text{CLL}_R$ and its operational semantics}

This subsection introduces the LLTS-oriented process calculus $\text{CLL}_R$ presented in \cite{Zhang14}.
Let $V_{AR}$ be an infinite set of variables.
The terms of $\text{CLL}_{R}$ can be given by the following BNF grammar
\[ t::= 0\;|\perp\;|\;(\alpha.t) \;|\; (t\Box t)\;|\;(t\wedge t)\;|\;(t\vee t)\;|\;(t\parallel_A t)\;|\;X\; | \;\langle Z|E \rangle \]
where $X \in V_{AR}$, $\alpha\in Act_\tau$, $A\subseteq Act$ and recursive specification
$E = E(V)$ with $V \subseteq V_{AR}$ is a set of equations $\{X = t| X \in V\}$ and $Z$ is a variable in $V$ that acts as the initial variable.

Most of these operators are from CCS \cite{Milner89} and CSP \cite{Hoare85}:
0 is the process capable of doing no action;
$\alpha.t$ is action prefixing;
$\Box$ is non-deterministic external choice;
$\parallel_A $ is a CSP-style parallel composition.
$\bot$ represents an inconsistent process with empty behavior.
$\vee$ and $\wedge$ are logical operators, which are intended for describing logical combinations of processes.

For any term $\langle Z|E \rangle$ with $E=E(V)$, each variable in  $V$ is bound with scope $E$.
This induces the notion of free occurrence of variable, bound (and free) variables and $\alpha$-equivalence as usual.
A term $t$ is a \emph{process} if it is closed, that is, it contains no free variable.
The set of all processes is denoted by $T(\Sigma_{\text{CLL}_R} )$.
Unless noted otherwise we use $p,q,r$ to represent processes.
Throughout this paper, as usual, we assume that recursive variables are distinct from each other and no recursive variable has free occurrence; moreover we don't distinguish between $\alpha$-equivalent terms and use $\equiv$ for both syntactical identical and $\alpha$-equivalence.
In the sequel, we often denote $\langle X|\{X=t_X\}\rangle$ briefly by $\langle X|X=t_X \rangle$.

 For any recursive specification $E(V)$ and term $t$, the term $\langle t|E \rangle$ is obtained from $t$ by simultaneously replacing all free occurrences of each $X(\in V)$ by $\langle X|E \rangle$, that is,  $\langle t|E \rangle \equiv t\{\langle X|E \rangle/X: X \in V\}$.
For example, consider $t \equiv X \Box a.\langle Y | Y = X \ \Box Y \rangle$ and $E(\{X\})=\{X=t_X\}$ then $\langle t| E\rangle \equiv \langle X|X =t_X\rangle \Box a.\langle Y | Y = \langle X|X=t_X\rangle \Box Y \rangle$.
In particular, for any $E(V)$ and $t \equiv X$, $\langle t|E \rangle \equiv \langle X|E\rangle$ whenever $X \in V$ and $\langle t|E \rangle \equiv X$ if $X \notin V$.

A context $C_{\widetilde{X}}$ is a term whose free variables are in some $n$-tuple distinct variables $\widetilde{X}=(X_1,...,X_n)$ with $n \geq 0$.
 Given $\widetilde{p}=(p_1,\dots,p_n)$, the term $C_{\widetilde{X}}\{p_1/X_1,...,p_n/X_n\}$ ($C_{\widetilde{X}}\{\widetilde{p}/\widetilde{X}\}$ for short) is obtained from $C_{\widetilde{X}}$ by replacing $X_i$ by $p_i$ for each $i \leq n$ simultaneously.
  A context $C_{\widetilde{X}}$ is stable if $C_{\widetilde{X}}\{\widetilde{0}/\widetilde{X}\} \not\stackrel{\tau}{\rightarrow} $.

An occurrence of $X$ in $t$ is strongly (or, weakly) guarded if such occurrence is within some subexpression $a.t_1$ with $a \in Act$ ($\tau.t_1$ or $t_1 \vee t_2$ resp.).
A variable $X$ is strongly (or, weakly) guarded in $t$ if each occurrence of $X$ is strongly (weakly resp.) guarded.
A recursive specification $E(V)$ is guarded if for each $X \in V$ and $Z = t_Z \in E(V)$, each occurrence of $X$ in $t_Z$ is (weakly or strongly) guarded.
 As usual, we assume that all recursive specifications considered in the remainder of this paper are guarded.
SOS rules of $\text{CLL}_R$ are listed in Table~\ref{Ta:OPERATIONAL_RULES}, where $a \in Act$, $\alpha \in Act_{\tau}$ and $A \subseteq Act$.
All rules are divided into two parts:

Operational rules specify behaviours of processes.
Negative premises in Rules $Ra_2$, $Ra_3$, $Ra_{13}$ and $Ra_{14}$ give $\tau$-transition precedence over visible transitions, which guarantees that the transition model of $\text{CLL}_{R}$ is $\tau$-pure.
Rules $Ra_9$ and $Ra_{10}$ illustrate that the operational aspect of $t_1\vee t_2$ is same as internal choice in usual process calculus.
Rule $Ra_6$ reflects that conjunction operator is a synchronous product for visible transitions.
The operational rules of the other operators are as usual.

Predicate rules specify  the inconsistency predicate $F$.
Rule $Rp_1$ says that $\bot$ is inconsistent.
Hence $\bot$ cannot be implemented.
While $0$ is consistent and implementable.
Thus $0$ and $\bot$ represent different processes.
Rule $Rp_3$ reflects that if both two disjunctive parts are inconsistent then so is the disjunction.
Rules $Rp_4-Rp_9$ describe the system design strategy that if one part is inconsistent, then so is the whole composition.
Rules $Rp_{10}$ and $Rp_{11}$ reveal that a stable conjunction is inconsistent whenever its conjuncts have distinct ready sets.
Rules $Rp_{13}$ and $Rp_{15}$ 
are used to capture (LTS2) in Def.~\ref{D:LLTS}.
Intuitively, these two rules say that if all stable $\tau$-descendants of $z$  are inconsistent, then $z$ itself is inconsistent.

\begin{table}[ht]
\rule{\textwidth}{0.5pt}
\noindent \textbf{Operational rules}\\
    $\begin{array}{lll}
    \displaystyle  \quad  Ra_1\frac{-}{\alpha.x_1 \stackrel{\alpha}{\rightarrow} x_1}  &
    \displaystyle  \quad  Ra_2\frac{x_1 \stackrel{a}{\rightarrow} y_1, x_2 \not \stackrel{\tau}{\rightarrow}}{x_1 \Box x_2 \stackrel{a}{\rightarrow} y_1} &
    \displaystyle  \;\;  Ra_3\frac{x_1 \not\stackrel{\tau}{\rightarrow} , x_2 \stackrel{a}{\rightarrow} y_2 }{x_1 \Box x_2 \stackrel{a}{\rightarrow} y_2} \\
    \displaystyle  \quad Ra_4\frac{x_1 \stackrel{\tau}{\rightarrow} y_1}{x_1 \Box x_2 \stackrel{\tau}{\rightarrow} y_1 \Box x_2}&
    \displaystyle   \quad Ra_5\frac{x_2 \stackrel{\tau}{\rightarrow} y_2}{x_1 \Box x_2 \stackrel{\tau}{\rightarrow} x_1 \Box y_2}&
    \displaystyle   \;\; Ra_6\frac{x_1 \stackrel{a}{\rightarrow} y_1, x_2 \stackrel{a}{\rightarrow}y_2}{x_1 \wedge x_2 \stackrel{a}{\rightarrow} y_1 \wedge y_2}\\
    \displaystyle   \quad Ra_7\frac{x_1 \stackrel{\tau}{\rightarrow} y_1}{x_1 \wedge x_2 \stackrel{\tau}{\rightarrow} y_1 \wedge x_2} &
    \displaystyle   \quad Ra_8\frac{x_2 \stackrel{\tau}{\rightarrow} y_2}{x_1 \wedge x_2 \stackrel{\tau}{\rightarrow} x_1 \wedge y_2}&
    \end{array}$

    $\begin{array}{ll}
    \displaystyle  \quad Ra_9\frac{-}{x_1 \vee x_2 \stackrel{\tau}{\rightarrow} x_1}
    &
    \displaystyle  \quad Ra_{10}\frac{-}{x_1 \vee x_2 \stackrel{\tau}{\rightarrow} x_2} \\
    \displaystyle  \quad Ra_{11}\frac{x_1 \stackrel{\tau}{\rightarrow} y_1}{x_1 \parallel_A x_2 \stackrel{\tau}{\rightarrow} y_1\parallel_A x_2} &
    \displaystyle  \quad Ra_{12}\frac{x_2 \stackrel{\tau}{\rightarrow} y_2}{x_1 \parallel_A x_2 \stackrel{\tau}{\rightarrow} x_1 \parallel_A y_2} \\
    \displaystyle  \quad Ra_{13}\frac{x_1 \stackrel{a}{\rightarrow} y_1 , x_2 \not \stackrel{\tau}{\rightarrow} }{x_1 \parallel_A x_2 \stackrel{a}{\rightarrow} y_1 \parallel_A x_2}(a\notin A)&
    \displaystyle  \quad Ra_{14}\frac{x_1 \not\stackrel{\tau}{\rightarrow} , x_2 \stackrel{a}{\rightarrow} y_2 }{x_1 \parallel_A x_2 \stackrel{a}{\rightarrow} x_1 \parallel_A y_2}(a\notin A)\\
    \displaystyle  \quad Ra_{15}\frac{x_1 \stackrel{a}{\rightarrow} y_1, x_2 \stackrel{a}{\rightarrow}y_2}{x_1\parallel_A x_2 \stackrel{a}{\rightarrow} y_1 \parallel_A y_2} (a\in A)&
    \displaystyle  \quad Ra_{16}\frac{\langle t_X| E \rangle  \stackrel{\alpha}{\rightarrow} y}{\langle X|E \rangle \stackrel{\alpha}{\rightarrow} y}(X=t_X \in E)
     \end{array}$

$\;$\\

\noindent \textbf{Predicative rules} \\
$\begin{array}{lll}
       \displaystyle \qquad Rp_1\frac{-}{\bot F}&
       \displaystyle \qquad\qquad Rp_2\frac{x_1 F}{\alpha .x_1 F} &
       \displaystyle \qquad\qquad Rp_3\frac{x_1 F, x_2 F}{x_1\vee x_2 F}\\
       \displaystyle  \qquad Rp_4\frac{x_1 F}{x_1\Box x_2 F}&
       \displaystyle \qquad\qquad Rp_5\frac{x_2 F}{x_1\Box x_2 F}&
       \displaystyle \qquad\qquad Rp_6\frac{x_1 F}{x_1\parallel_A x_2 F}\\
       \displaystyle \qquad Rp_7\frac{x_2 F}{x_1\parallel_A x_2 F}&
       \displaystyle \qquad\qquad Rp_8\frac{x_1 F}{x_1\wedge x_2 F}&
       \displaystyle \qquad\qquad Rp_9\frac{x_2 F}{x_1\wedge x_2 F}
    \end{array}$

    $\begin{array}{ll}
       \displaystyle \quad Rp_{10}\frac{x_1 \stackrel{a}{\rightarrow} y_1, x_2 \not\stackrel{a}{\rightarrow}, x_1 \wedge x_2 \not\stackrel{\tau}{\rightarrow}}{x_1 \wedge x_2 F}&
       \displaystyle \quad  Rp_{11}\frac{x_1 \not\stackrel{a}{\rightarrow} , x_2 \stackrel{a}{\rightarrow} y_2, x_1 \wedge x_2 \not\stackrel{\tau}{\rightarrow}}{x_1 \wedge x_2 F}\\
       \displaystyle \quad Rp_{12}\frac{x_1 \wedge x_2 \stackrel{\alpha}{\rightarrow} z, \{yF:x_1 \wedge x_2 \stackrel{\alpha}{\rightarrow}y\}}{x_1 \wedge x_2 F} &
       \displaystyle \quad Rp_{13}\frac{\{yF:x_1 \wedge x_2 \stackrel{\epsilon}{\Rightarrow}|y\}}{x_1 \wedge x_2 F} \\
       \displaystyle \quad Rp_{14}\frac{\langle t_X|E \rangle F}{\langle X|E \rangle F}(X = t_X \in E) &
       \displaystyle \quad Rp_{15}\frac{\{yF:\langle X|E \rangle \stackrel{\epsilon}{\Rightarrow}|y\}}{\langle X|E \rangle F}
    \end{array}
    $
\rule{\textwidth}{0.5pt}
\caption{SOS rules of $\text{CLL}_R$\label{Ta:OPERATIONAL_RULES}}
\end{table}

 It has been shown that $\text{CLL}_R$ has the unique stable transition model $M_{\text{CLL}_R}$ \cite{Zhang14},
 which exactly consists of all positive literals of the form $t \stackrel{\alpha}{\rightarrow}t'$ or $tF$ that are provable in $Strip(\text{CLL}_{R},M_{\text{CLL}_{R}})$.
 Here $Strip(\text{CLL}_{R},M_{\text{CLL}_{R}})$ is the stripped version \cite{Bol96} of $\text{CLL}_{R}$ w.r.t $M_{\text{CLL}_{R}}$.
 Each rule in $Strip(\text{CLL}_{R},M_{\text{CLL}_{R}})$ is of the form $\frac{pprem(r)}{conc(r)}$ for some ground instance  $r$ of rules in $\text{CLL}_R$ such that $M_{\text{CLL}_{R}}\models nprem(r)$, where  $nprem(r)$ (or, $pprem(r)$) is the set of negative (positive resp.) premises of $r$, $conc(r)$ is the conclusion of $r$ and $M_{\text{CLL}_{R}}\models nprem(r)$ means that for each $t\not\stackrel{\alpha}{\rightarrow} \in nprem(r)$, $t\stackrel{\alpha}{\rightarrow}s \notin M_{\text{CLL}_{R}}$ for any $s \in T(\Sigma_{\text{CLL}_{R}})$.

   The LTS associated with $\text{CLL}_{R}$, in symbols $LTS(\text{CLL}_{R})$, is the quadruple
    $(T(\Sigma_{\text{CLL}_{R}}),Act_{\tau},\rightarrow_{\text{CLL}_{R}},F_{\text{CLL}_{R}})$, where
    $p \stackrel{\alpha}{\rightarrow}_{\text{CLL}_{R}} p'$ iff $p\stackrel{\alpha}{\rightarrow} p' \in M_{\text{CLL}_{R}}$, and
    $p\in F_{\text{CLL}_{R}}$ iff $pF \in M_{\text{CLL}_{R}}$.
    Therefore $p \stackrel{\alpha}{\rightarrow}_{\text{CLL}_{R}} p'$ (or, $p \in F_{\text{CLL}_R}$) iff  $Strip( \text{CLL}_{R}, M_{\text{CLL}_{R}}) \vdash p\stackrel{\alpha}{\rightarrow}p'$ ($pF$ resp.) for any $p$, $p'$ and $\alpha \in Act_{\tau}$.
    For simplification, in the following we omit the subscripts in $\stackrel{\alpha}{\rightarrow}_{\text{CLL}_{R}}$ and $F_{\text{CLL}_{R}}$.

We end this section by quoting some results from \cite{Zhang14}.

\begin{lemma}\label{L:F_NORMAL}
Let $p$ and $q$ be any two processes. Then

    \noindent (1) $p \vee q \in F $  iff $p,q \in F $;\\
    \noindent (2)  $\alpha.p \in F $ iff $p \in F $ for each $\alpha \in Act_{\tau}$;\\
    \noindent (3)  $p \odot q \in F $  iff either $p \in F $ or $q \in F $  with $\odot \in \{\Box, \parallel_A\}$;\\
    \noindent (4) $p \in F $ or $q \in F $ implies $p \wedge q \in F $;\\
    \noindent (5) $0 \notin F $ and $\bot \in F $;\\
    \noindent (6) $\langle X|E \rangle \in F$ iff $\langle t_X|E\rangle \in F$ for each $X$ with $X=t_X \in E$.
\end{lemma}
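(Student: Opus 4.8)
The plan is to reason about the inconsistency predicate through its syntactic characterisation recalled above: for a process $p$ we have $p\in F$ iff the literal $pF$ is provable in the stripped system $Strip(\text{CLL}_{R},M_{\text{CLL}_{R}})$, and the latter is a purely positive (though possibly infinitely branching) transition system specification, so every such derivation is a well-founded proof tree. Hence each clause splits into an ``easy'' direction, witnessed by a single application of one predicate rule of Table~\ref{Ta:OPERATIONAL_RULES}, and a ``hard'' direction, obtained by inspecting which predicate rules can possibly have the relevant conclusion (rule inversion), occasionally combined with induction. One standard observation about well-founded proof trees will be used in clause~(6): if $\ell$ labels the root, then the tree has an occurrence of $\ell$ with no $\ell$-labelled proper descendant, and the rule instance justifying that occurrence cannot have $\ell$ among its premises.

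For clauses (1)--(5) the easy directions are immediate: $p,q\in F$ gives $p\vee q\in F$ by $Rp_3$; $p\in F$ gives $\alpha.p\in F$ by $Rp_2$; $p\in F$ gives $p\Box q\in F$ by $Rp_4$ and $p\parallel_A q\in F$ by $Rp_6$ (symmetrically $q\in F$ uses $Rp_5,Rp_7$); $p\in F$ or $q\in F$ gives $p\wedge q\in F$ by $Rp_8$ or $Rp_9$; and $\bot\in F$ is $Rp_1$. For the converses of (1), (2), (3) and the remaining half of (5) I would invert the last rule of a proof of, respectively, $p\vee q\,F$, $\alpha.p\,F$, $p\Box q\,F$, $p\parallel_A q\,F$, $0\,F$: among the predicate rules, the only one whose conclusion matches a disjunction is $Rp_3$, the only one matching a prefix is $Rp_2$, the only ones matching $\Box$ (resp.\ $\parallel_A$) are $Rp_4,Rp_5$ (resp.\ $Rp_6,Rp_7$), and no predicate rule has a conclusion of the form $0\,F$. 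In the first four cases the premises of the applicable rule are precisely the assertions ``an immediate subterm is in $F$'' that we want; the last case yields $0\notin F$. (No converse is claimed for (4), and indeed $p\wedge q$ can be inconsistent for reasons unrelated to $p,q$, via $Rp_{10}$--$Rp_{13}$.)

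Clause (6) is the delicate one. ``$\langle t_X|E\rangle\in F$ implies $\langle X|E\rangle\in F$'' is just $Rp_{14}$. For the converse, take a proof of $\langle X|E\rangle\,F$; its last rule is $Rp_{14}$ or $Rp_{15}$, these being the only predicate rules with a conclusion of shape $\langle X|E\rangle\,F$. If it is $Rp_{14}$ we are done. If it is $Rp_{15}$, I would first note, using $Ra_{16}$ (which is moreover the only operational rule that can fire on $\langle X|E\rangle$), that $\langle X|E\rangle$ and $\langle t_X|E\rangle$ have exactly the same outgoing transitions, so in particular one is stable iff the other is. If $\langle X|E\rangle$ is stable, it is its own unique $\stackrel{\epsilon}{\Rightarrow}|$-descendant, so $\langle X|E\rangle\,F$ lies among the premises of this very $Rp_{15}$-instance; by the observation above the proof must somewhere justify a node labelled $\langle X|E\rangle\,F$ by $Rp_{14}$ instead, and that node's premise $\langle t_X|E\rangle\,F$ is provable. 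If $\langle X|E\rangle$ is unstable, then the sets of stable $\tau$-descendants of $\langle X|E\rangle$ and of $\langle t_X|E\rangle$ coincide, and all of them lie in $F$ by the premises of $Rp_{15}$; from this I would conclude $\langle t_X|E\rangle\in F$ either by invoking that $LTS(\text{CLL}_R)$ is a Logic LTS, so condition (LTS2) of Def.~\ref{D:LLTS} applies to $\langle t_X|E\rangle$, or, self-containedly, by induction on the structure of $t_X$: the cases $t_X\equiv Y$ and $t_X\equiv u_1\wedge u_2$ are closed by $Rp_{15}$ and $Rp_{13}$; a (necessarily stable) $0$, $\bot$ or visibly prefixed $t_X$ is trivial; and for $\tau.u$, $u_1\Box u_2$, $u_1\vee u_2$, $u_1\parallel_A u_2$ one uses clauses (1)--(3) to describe the inconsistency of the composite descendants, reduces to a structurally smaller unstable component, and applies the induction hypothesis.

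The main obstacle is precisely this last sub-case. The predicate rules of Table~\ref{Ta:OPERATIONAL_RULES} contain an explicit ``divergence is catastrophic'' rule (capturing (LTS2)) only for conjunctions ($Rp_{13}$) and recursion constants ($Rp_{15}$), so transferring the information ``every stable $\tau$-descendant is inconsistent'' from $\langle X|E\rangle$ to an arbitrary unfolding $\langle t_X|E\rangle$ is not a one-line inversion but needs either the global structural fact that $LTS(\text{CLL}_R)$ satisfies (LTS2) everywhere or the term induction sketched above; the rest (coincidence of transitions, stability bookkeeping, well-foundedness of proof trees) is routine.
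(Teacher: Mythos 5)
The paper itself contains no proof of this lemma: it is quoted, together with Theorems~\ref{L:LLTS} and~\ref{L:precongruence}, from \cite{Zhang14}, so there is no in-paper argument to compare yours against. Judged on its own terms, your proposal is essentially correct and is the expected argument: the easy directions are single applications of $Rp_1$--$Rp_9$, the converses of (1)--(3) and (5) follow by inverting the last rule of a well-founded proof tree in $Strip(\text{CLL}_R,M_{\text{CLL}_R})$ (no predicate rule concludes $0\,F$, and only $Rp_2$, $Rp_3$, $Rp_4$--$Rp_7$ match the respective shapes), and your treatment of (6) --- splitting on whether the relevant rule is $Rp_{14}$ or $Rp_{15}$, using $Ra_{16}$ and its inversion to identify the transitions (hence the stable $\epsilon$-descendants) of $\langle X|E\rangle$ and $\langle t_X|E\rangle$, and extracting an $Rp_{14}$-justified node via the minimal-occurrence observation when $\langle X|E\rangle$ is stable --- is sound. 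Two cautions on the unstable subcase of (6): invoking Theorem~\ref{L:LLTS} there is delicate, since that theorem is likewise imported from \cite{Zhang14} and its proof that $LTS(\text{CLL}_R)$ satisfies (LTS2) quite plausibly rests on the present lemma, so the self-contained structural induction you sketch is the version to keep; and that induction should also cover the case where $t_X$ is itself a recursion constant $\langle Z|E'\rangle$, which is closed exactly like the variable case by $Rp_{15}$. The reductions you indicate for $\tau.u$, $\Box$, $\vee$ and $\parallel_A$ do go through, e.g.\ for $\Box$ by combining the contrapositive of the induction hypothesis for each component with clause (3) applied to a stable descendant of the form $p_1'\Box p_2'$, so the sketch, once these details are written out, yields a complete proof.
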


 \begin{theorem}\label{L:LLTS}
    $LTS({\text{CLL}_{R}})$ is a $\tau$-pure LLTS. Moreover if $p\in F$ and $\tau\in \mathcal{I}(p)$ then $\forall q(p\stackrel{\tau}{\rightarrow}q \;\text{implies}\; q \in F)$.
\end{theorem}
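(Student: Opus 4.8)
The plan is to verify in turn the three defining conditions of a $\tau$-pure LLTS and then the final propagation clause, proceeding throughout by induction on the structure of the process $p$; the recursive construct $\langle X|E\rangle$ will always be reduced to its unfolding $\langle t_X|E\rangle$, with which it shares exactly the transitions (Rule $Ra_{16}$) and the $F$-status (Rule $Rp_{14}$, Lemma~\ref{L:F_NORMAL}(6)), this reduction being well founded because $E$ is guarded (equivalently, because literals enter the stable model $M_{\text{CLL}_R}$ along a stratification). \emph{$\tau$-purity.} I would prove by structural induction that $\mathcal{I}(p)=\{\tau\}$ or $\mathcal{I}(p)\subseteq Act$. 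The cases $0$, $\bot$, $\alpha.t$ are immediate from $Ra_1$, and $t_1\vee t_2$ emits only $\tau$ via $Ra_9$, $Ra_{10}$. For $t_1\Box t_2$, $t_1\wedge t_2$ and $t_1\parallel_A t_2$ the crux is exactly that the negative premises $x_i\not\stackrel{\tau}{\rightarrow}$ in $Ra_2$, $Ra_3$, $Ra_{13}$, $Ra_{14}$ (and the synchronising shape of $Ra_6$, $Ra_{15}$) block every visible transition of the compound as soon as one component offers $\tau$; the induction hypothesis on that component closes the case. This is the routine part.

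\emph{(LTS1) and the propagation clause.} For (LTS1) I would show, by induction on $p$, that if some $\alpha\in\mathcal{I}(p)$ has all its $\alpha$-successors in $F$ then $p\in F$. For $p=t_1\wedge t_2$ this is literally Rule $Rp_{12}$; for $p=\alpha.t$ it is $Rp_2$; for $p=t_1\Box t_2$, $t_1\parallel_A t_2$, $t_1\vee t_2$ one rewrites membership in $F$ through Lemma~\ref{L:F_NORMAL}, notes that the $\alpha$-successors of the compound are assembled from those of the components (using $\tau$-purity to know the compound is stable when $\alpha\in Act$), and applies the induction hypothesis to whichever component carries $\alpha$. The propagation clause — if $p\in F$ and $\tau\in\mathcal{I}(p)$ then every $q$ with $p\stackrel{\tau}{\rightarrow}q$ lies in $F$ — I would obtain by a case analysis on the last predicate rule witnessing $p\in F$: $Rp_{10}$, $Rp_{11}$ cannot apply since their premise $p\not\stackrel{\tau}{\rightarrow}$ contradicts $\tau\in\mathcal{I}(p)$; $Rp_{12}$ must be instantiated with $\alpha=\tau$ (again by $\tau$-purity) and then already asserts the conclusion; the ``downward'' rules $Rp_2$--$Rp_9$ push the obligation onto a component, where Lemma~\ref{L:F_NORMAL} and the induction hypothesis finish it; and when $p\in F$ comes from $Rp_{13}$ (so all stable $\tau$-descendants of $p$ are in $F$), the same rule applies to any $\tau$-successor $q=q_1\wedge q_2$ of $p$, because every stable $\tau$-descendant of $q$ is also one of $p$.

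\emph{(LTS2).} Using the propagation clause I would first observe that, for every $p$, ``$\nexists q.\,p\fsactions{\epsilon}q$'' is equivalent to ``every stable $\tau$-descendant of $p$ belongs to $F$''; the only non-obvious direction extracts an $F$-avoiding $\tau$-path to a stable state from an arbitrary $\tau$-path to one, which works precisely because inconsistency is never lost along $\tau$-steps. With this reformulation, (LTS2) for $\wedge$ and for $\langle X|E\rangle$ is exactly Rule $Rp_{13}$, resp.\ $Rp_{15}$; for $\alpha.t$ and $t_1\vee t_2$ the stable $\tau$-descendants reduce to those of the components, so Lemma~\ref{L:F_NORMAL}(1),(2) with the induction hypothesis suffices; and for $t_1\Box t_2$ and $t_1\parallel_A t_2$ one uses that a $\tau$-move of the compound merely interleaves $\tau$-moves of the two sides, so its stable $\tau$-descendants are exactly the $r_1\odot r_2$ with $\odot\in\{\Box,\parallel_A\}$ and $r_i$ a stable $\tau$-descendant of $t_i$; by Lemma~\ref{L:F_NORMAL}(3) these are all in $F$ iff all stable $\tau$-descendants of $t_1$ are or all of $t_2$ are, and the induction hypothesis gives $t_1\in F$ or $t_2\in F$.

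\emph{Main obstacle.} I expect the genuine difficulty to be twofold. One part is bookkeeping: since $\langle t_X|E\rangle$ is not a structural subterm of $\langle X|E\rangle$, the inductions above are not honest structural inductions and must really be carried over the stratification rank at which literals enter $M_{\text{CLL}_R}$ (or an equivalent well-founded measure supplied by guardedness), which forces $\tau$-purity, (LTS1), (LTS2) and the propagation clause to be treated as one simultaneous argument. The other, more conceptual, part is that (LTS2) is phrased with the $F$-respecting arrow $\fsactions{\epsilon}$ whereas the rules $Rp_{13}$, $Rp_{15}$ speak about the plain arrow $\sactions{\epsilon}$; reconciling the two is exactly the role of the propagation clause, so the latter must be available before (LTS2) can be closed.
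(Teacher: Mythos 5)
Your proposal cannot be compared with an in-paper argument, because the paper gives none: Theorem~\ref{L:LLTS} is quoted from \cite{Zhang14} without proof. Judged on its own merits, your overall strategy is the right one — establish $\tau$-purity, (LTS1), the propagation clause and (LTS2) by a simultaneous induction, reducing $\langle X|E\rangle$ to $\langle t_X|E\rangle$ via $Ra_{16}$/$Rp_{14}$ and replacing structural induction by a well-founded measure coming from the stripped stable model (induction on proof trees in $Strip(\text{CLL}_R,M_{\text{CLL}_R})$ is the cleanest way to make that honest, and you correctly flag this as the bookkeeping obstacle). The $\tau$-purity and (LTS1) parts of your sketch are sound.

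The genuine gap is in the case analysis for the propagation clause. You dispose of $Rp_{13}$ by observing that a $\tau$-successor of a conjunction is again a conjunction, but you say nothing about $Rp_{14}$ and, more seriously, $Rp_{15}$: if $p\equiv\langle X|E\rangle\in F$ is derived by $Rp_{15}$, a $\tau$-successor $q$ of $p$ is a $\tau$-successor of $\langle t_X|E\rangle$ and can have any shape (e.g.\ $t_X\equiv a.\bot\vee b.\bot$ gives successors $a.\bot$ and $b.\bot$, neither a conjunction nor a recursion), so neither $Rp_{13}$ nor $Rp_{15}$ applies to $q$, and ``every stable $\tau$-descendant of $q$ is one of $p$'' does not by itself yield $q\in F$. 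What is needed there is the auxiliary implication ``if every stable (plain-arrow) $\tau$-descendant of $r$ is in $F$ then $r\in F$'' for arbitrary $r$ — precisely the reformulation you introduce only in your (LTS2) paragraph. As written, your dependency order is therefore circular-looking: you insist the propagation clause ``must be available before (LTS2) can be closed'', yet the $Rp_{15}$ case of the propagation clause already needs that (LTS2)-style lemma. The repair is to note that the auxiliary implication needs no propagation at all: it follows by induction on the term, using $Rp_{13}$ and $Rp_{15}$ outright for $\wedge$ and $\langle X|E\rangle$, Lemma~\ref{L:F_NORMAL} for $\vee$, $\Box$, $\parallel_A$ (whose stable $\tau$-descendants decompose componentwise), and trivially for stable terms; only the direction ``$\nexists q.\,p\fsactions{\epsilon}q$ implies every stable $\tau$-descendant of $p$ is in $F$'' uses propagation. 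So the working order is: auxiliary lemma first, then the propagation clause (invoking the lemma in the $Rp_{13}$/$Rp_{15}$ cases and the unfolding measure in the $Rp_{14}$ case), then (LTS2). With that reordering and with the induction carried out over proof trees rather than term structure, your sketch goes through.
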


\begin{theorem}[precongruence]\label{L:precongruence}
  If $p \sqsubseteq_{RS} q$ then $C_X\{p/X\} \sqsubseteq_{RS} C_X\{q/X\}$.
\end{theorem}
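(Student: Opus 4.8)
The plan is to prove the statement by structural induction on $C_X$, arguing directly rather than appealing to a general rule-format congruence theorem, since the look-ahead/infinitary premises of rules $Rp_{12}$, $Rp_{13}$ and $Rp_{15}$ place the operational specification of $\text{CLL}_R$ outside the standard formats for which such theorems are available. Because $\sqsubseteq_{RS}$ is a pre-order (Def.~\ref{D:RS}), it suffices to establish (i) that each operator of $\text{CLL}_R$ is monotone with respect to $\sqsubseteq_{RS}$ in each of its arguments separately, and (ii) that $p\sqsubseteq_{RS}q$ implies $\langle Z|E_p\rangle\sqsubseteq_{RS}\langle Z|E_q\rangle$, where $E$ is a guarded specification and $E_p$ (resp.\ $E_q$) denotes the specification obtained from $E$ by replacing every free occurrence of $X$ by $p$ (resp.\ $q$); the general case then follows by induction on the structure of $C_X$ together with transitivity of $\sqsubseteq_{RS}$. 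In each case the device is to pass to stable $\tau$-descendants — every consistent process has one, by (LTS2) and Theorem~\ref{L:LLTS} — and to exhibit an explicit relation $\mathcal{R}$ containing the relevant stable pairs together with $\underset{\thicksim}{\sqsubset}_{RS}$ (which is itself a stable ready simulation), and to verify (RS1)--(RS4); the definition of $\sqsubseteq_{RS}$ then bridges from stable to arbitrary processes.

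For $\alpha.-$, $\Box$ and $\vee$ the verification is routine: Lemma~\ref{L:F_NORMAL}(1)--(3) reduces inconsistency of a composite to that of its components, $\tau$-purity (Theorem~\ref{L:LLTS}) controls which $\stackrel{a}{\Rightarrow}_F|$-moves are enabled, and $p\sqsubseteq_{RS}q$ supplies, for each such move of the $p$-side, a matching move of the $q$-side landing in $\underset{\thicksim}{\sqsubset}_{RS}$, together with preservation of consistency and, via (RS4), of ready sets. The substantive cases are $\parallel_A$ and $\wedge$. For $\parallel_A$ one shows that a move of $p\parallel_A r$ that synchronises with, or is blocked by, the $r$-component is matched by $q\parallel_A r$ while respecting the predicate rules $Rp_6$, $Rp_7$ and Lemma~\ref{L:F_NORMAL}(3). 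For $\wedge$ one must additionally control $Rp_{10}$--$Rp_{13}$: the crucial fact is that a stable $p\wedge r$ is consistent exactly when $p$ and $r$ are consistent with equal ready sets and, for every visible action, not all of their jointly-enabled derivatives are inconsistent; since $p\underset{\thicksim}{\sqsubset}_{RS}q$ yields $\mathcal{I}(p)=\mathcal{I}(q)$ whenever $p\notin F$ (RS4), these ready-set side conditions carry over from $p\wedge r$ to $q\wedge r$. I expect the delicate point to be the implication ``$q\wedge r\in F$ implies $p\wedge r\in F$'', since $F$ on conjunctions is generated by the look-ahead rules $Rp_{12}$, $Rp_{13}$; the plan is to prove it by induction on the height of a derivation of $q\wedge r\in F$ in $Strip(\text{CLL}_R,M_{\text{CLL}_R})$, threading the induction through the matching moves furnished by the ready simulation between $p$ and $q$. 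This is, in the abstract LLTS setting, essentially the content of Theorem~21 of~\cite{Luttgen10}, which one could alternatively invoke after checking that $LTS(\text{CLL}_R)$ meets its hypotheses.

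Finally, for the recursion case the goal is $\langle Z|E_p\rangle\sqsubseteq_{RS}\langle Z|E_q\rangle$. The plan is to build a ready simulation ``up to $\text{CLL}_R$-contexts'': let $\mathcal{R}$ be the smallest relation containing the pairs $(\langle Z'|E_p\rangle,\langle Z'|E_q\rangle)$ over the recursion variables $Z'$ of $E$ that is closed under applying a common stable context to related arguments and under $\underset{\thicksim}{\sqsubset}_{RS}$ — that is, closed under the operator monotonicities already proven. Unfolding via $Ra_{16}$ together with the guardedness of $E$ guarantees that after matching one $\stackrel{a}{\Rightarrow}_F|$-step the recursion processes reappear strictly underneath a guard (and the copies of $p$ introduced by $E_p$ are exposed only at positions occupied by the corresponding copies of $q$, which $\underset{\thicksim}{\sqsubset}_{RS}$ absorbs), so that $\mathcal{R}$ is re-entered; the predicate is handled by $Rp_{14}$ and $Rp_{15}$, once more by induction on a derivation of inconsistency. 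I expect this recursion case — in particular reconciling the infinitary rule $Rp_{15}$ (all stable $\tau$-descendants inconsistent) with the up-to-context closure of $\mathcal{R}$ — to be the main obstacle; once the $F$-characterisations of $\parallel_A$ and $\wedge$ and the unfolding bookkeeping are in hand, the remaining steps are routine.
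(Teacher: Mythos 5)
The paper itself gives no proof of Theorem~\ref{L:precongruence}: it is quoted from \cite{Zhang14}, and the machinery reused in Section~3 (the context-decomposition Lemmas~\ref{L:ONE_ACTION_TAU}--\ref{L:MULTI_TAU_GF_STABLE}, the $\Omega$/well-founded-proof-tree argument of Lemma~\ref{L:precongruence_F}, and the up-to-$\SRS$ technique of Definition~\ref{D:ALT_RS_UPTO}) indicates that the proof there analyses a single global relation of the shape $\{(C_X\{p/X\},C_X\{q/X\})\}$ over \emph{all} contexts at once, rather than proceeding operator by operator. Your decomposition --- monotonicity of each operator plus a separate recursion case, with the option of importing Theorem~21 of \cite{Luttgen10} for $\parallel_A$ and $\wedge$ --- is a genuinely different route, and the static-operator part is plausible in outline, but as it stands the proposal has real gaps.

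Concretely: (1) the recursion case relies on a claim that is false under the paper's conventions. ``Guarded'' only requires the recursion variables of $E$ to be \emph{weakly} guarded (an occurrence under $\tau.t_1$ or under $t_1\vee t_2$ counts), and it imposes no constraint whatsoever on the free hole $X$, since $X\notin V$. Hence after unfolding via $Ra_{16}$, recursion variables and the substituted copies of $p$ (resp.\ $q$) can be re-exposed at top level by $\tau$-moves alone, before any visible action, so ``after matching one $\stackrel{a}{\Rightarrow}_F|$-step the recursion processes reappear strictly underneath a guard'' does not hold; moreover the copies of $p$ and $q$ exposed in this way are in general unstable, so they cannot be ``absorbed by $\SRS$'', which relates only stable states (RS1). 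This is exactly the difficulty that Lemmas~\ref{L:ONE_ACTION_TAU} and \ref{L:MULTI_TAU_GF_STABLE} are designed to handle, by tracking several residuals of $p$ in a multi-hole context. (2) The soundness of your ``ready simulation up to context and up to $\SRS$'' is nowhere available: the only up-to principle quoted from \cite{Zhang14} is up to $\SRS$ (Definition~\ref{D:ALT_RS_UPTO}), without context closure, and for weak simulations an up-to-context principle needs its own delicate soundness proof --- one that is essentially of the same strength as the precongruence you are trying to establish, so invoking it here is close to circular. (3) Appealing to Theorem~21 of \cite{Luttgen10} for $\parallel_A$ and $\wedge$ presupposes that the SOS-generated operators of $LTS(\text{CLL}_R)$, including the $F$-predicate generated by $Rp_{10}$--$Rp_{13}$, coincide with the semantic LLTS constructions of \cite{Luttgen10}; that verification is a substantial piece of work in its own right, not a routine hypothesis check. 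Until the recursion case is repaired --- most naturally by switching to the global-relation strategy supported by the decomposition lemmas and a proof-tree induction for $F$ --- the proposal does not amount to a proof.
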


\section{More on solutions of equations in $\text{CLL}_R$}
In \cite{Zhang14}, the following theorem has been obtained.\\

\noindent \textbf{Theorem} (Unique solution). For any $p,q \notin F $ and $t_X$ where $X$ is strongly guarded and does not occur in the scope of any conjunction, if $p =_{RS} t_X\{p/X\}$ and $q =_{RS} t_X\{q/X\}$ then $p =_{RS} q$.
Moreover $\langle X | X=t_X \rangle$ is the unique consistent solution (modulo $=_{RS}$) of the equation $X =_{RS} t_X$ whenever  consistent solutions exist.\\

As we know, temporal operators could be described in equational style, represented by fixpoint of some equations \cite{ModelChecking}.
Such style requires us to remove the special requirement (i.e. $X$ does not occur in the scope of any conjunction) occurring in Theorem Unique Solution.
In the following, we give a negative answer for this removement by providing a counterexample:

\begin{observation}
  Consider the equation $X=t_X$ where $t_X \equiv (\langle Y |Y =a.Y\rangle \wedge a.X) \vee (\langle Z |Z =b.Z\rangle \wedge b.X)$.
  In the following, we show that $\langle X|X=a.X \rangle$ is a consistent solution of this equation.
  First we show that $\langle X|X=a.X \rangle \notin F$. Contrarily, assume that $\langle X|X=a.X \rangle \in F$. Then the last rule applied in the proof tree of $Strip(\text{CLL}_R,M_{\text{CLL}_R}) \vdash \langle X| X = a.X \rangle F$ is
  \[\frac{a.\langle X| X = a.X  \rangle F}{\langle X| X =a.X \rangle F}\;\text{or}\;\frac{\{rF:\langle X| X =a.X\rangle  \stackrel{\epsilon}{\Rightarrow}|r\}}{\langle X| X =a.X \rangle F}.\]
  It is not difficult to see that every proof tree of $\langle X| X = a.X \rangle F$ has proper subtree with root $\langle X| X = a.X \rangle F$, this contradicts the well-foundedness of proof tree, as desired.
  Second we show that $\langle X|X=a.X \rangle$ indeed is a solution of $X=_{RS}t_X$. Clearly, due to Rules $Rp_{10}$ and $Rp_{11}$, $\langle Z |Z =b.Z\rangle \wedge \langle X|X=a.X \rangle \in F$, which is the unique $b$-derivative of $\langle Z |Z =b.Z\rangle \wedge b.\langle X|X=a.X \rangle$.
  Hence  $\langle Z |Z =b.Z\rangle \wedge b.\langle X|X=a.X \rangle \in F$ by Condition (LTS1) in Def.~\ref{D:LLTS} and Theorem~\ref{L:LLTS}.
  Moreover we also have $\langle X|X=a.X \rangle =_{RS} \langle Y|Y=a.Y \rangle \wedge a.\langle X|X=a.X\rangle$.
  Therefore $\langle X|X=a.X \rangle =_{RS} t_X\{\langle X|X=a.X \rangle /X\}$.
  Similarly, $\langle X|X=b.X \rangle$ is another consistent solution. However, $\langle X|X=a.X \rangle \not=_{RS}\langle X|X=b.X \rangle$.
\end{observation}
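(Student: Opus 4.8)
To keep notation light, put $p\equiv\langle X|X=a.X\rangle$, $q\equiv\langle X|X=b.X\rangle$, $Y_\infty\equiv\langle Y|Y=a.Y\rangle$ and $Z_\infty\equiv\langle Z|Z=b.Z\rangle$; note that $Y_\infty\equiv p$ and $Z_\infty\equiv q$ (these pairs are $\alpha$-equivalent), that $p$ and $q$ are stable with unique transitions $p\stackrel{a}{\rightarrow}p$ and $q\stackrel{b}{\rightarrow}q$, and that $t_X\{p/X\}\equiv(Y_\infty\wedge a.p)\vee(Z_\infty\wedge b.p)$. I would break the claim into four parts: (i) the consistency facts $p\notin F$, $Y_\infty\wedge p\notin F$ and $Y_\infty\wedge a.p\notin F$; (ii) $p=_{RS}t_X\{p/X\}$; (iii) the statements symmetric to (i)--(ii) for $q$; and (iv) $p\neq_{RS}q$. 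Parts (ii)--(iv) are essentially finite-state checks against Definitions~\ref{D:LLTS}--\ref{D:RS}, so the real work sits in (i).

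For (i) I would argue by well-foundedness of the proof trees of $Strip(\text{CLL}_R,M_{\text{CLL}_R})$, handling the three facts in order of increasing complexity. Since $p\equiv Y_\infty$ is a recursion term, the only predicate rules of Table~\ref{Ta:OPERATIONAL_RULES} that can end a proof of $p\,F$ are $Rp_{14}$, with premise $(a.p)\,F$, and $Rp_{15}$, whose premise set is just $\{p\,F\}$ because $p$ is stable; and $(a.p)\,F$ in turn forces $p\,F$ by $Rp_2$ (equivalently Lemma~\ref{L:F_NORMAL}(2)). Hence any finite proof of $p\,F$ would contain a proper subtree with root $p\,F$, which is impossible; so $p\notin F$, and $Y_\infty\notin F$ by $\alpha$-equivalence. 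Next, the only rules that can conclude $(Y_\infty\wedge p)\,F$ are $Rp_8$ (premise $Y_\infty\,F$), $Rp_9$ (premise $p\,F$), $Rp_{10}/Rp_{11}$, $Rp_{12}$ and $Rp_{13}$: the first two are now excluded, $Rp_{10}/Rp_{11}$ cannot fire because $\mathcal{I}(Y_\infty)=\{a\}=\mathcal{I}(p)$, and $Rp_{12}$ (with $\alpha=a$) and $Rp_{13}$ each have $(Y_\infty\wedge p)\,F$ among their premises, since $Y_\infty\wedge p\stackrel{a}{\rightarrow}Y_\infty\wedge p$ is the unique transition and $Y_\infty\wedge p$ is stable; well-foundedness again gives $Y_\infty\wedge p\notin F$. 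Finally, for $(Y_\infty\wedge a.p)\,F$ the candidates are $Rp_8$ ($Y_\infty\,F$), $Rp_9$ ($(a.p)\,F$, hence $p\,F$), $Rp_{10}/Rp_{11}$ (excluded since $\mathcal{I}(Y_\infty)=\{a\}=\mathcal{I}(a.p)$), $Rp_{12}$ with $\alpha=a$ (relevant premise $(Y_\infty\wedge p)\,F$, as $Y_\infty\wedge p$ is the unique $a$-derivative), and $Rp_{13}$ (self-referential, $Y_\infty\wedge a.p$ being stable); every applicable rule either cites an already-excluded fact or is self-referential, so $Y_\infty\wedge a.p\notin F$.

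For (ii) I would first compute the behaviour of $t_X\{p/X\}$. By $Ra_6$, $Z_\infty\wedge b.p$ has the single transition $Z_\infty\wedge b.p\stackrel{b}{\rightarrow}Z_\infty\wedge p$; the target is stable with $\mathcal{I}(Z_\infty)=\{b\}\neq\{a\}=\mathcal{I}(p)$, hence $Z_\infty\wedge p\in F$ by $Rp_{11}$ and then $Z_\infty\wedge b.p\in F$ by condition (LTS1) of Def.~\ref{D:LLTS} (Theorem~\ref{L:LLTS}). So by Lemma~\ref{L:F_NORMAL}(1) and (i), $t_X\{p/X\}\notin F$, and since its only $\tau$-derivatives (by $Ra_9$, $Ra_{10}$) are the stable terms $Y_\infty\wedge a.p$ and $Z_\infty\wedge b.p$, of which only the former is $F$-free, $Y_\infty\wedge a.p$ is the unique $r$ with $t_X\{p/X\}\stackrel{\epsilon}{\Rightarrow}_F|r$. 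I would then verify that $\mathcal{R}=\{(p,\,Y_\infty\wedge a.p),(p,\,Y_\infty\wedge p)\}$ and its converse are stable ready simulations --- clauses (RS1)--(RS4) reduce to the facts already established together with $\mathcal{I}(p)=\mathcal{I}(Y_\infty\wedge a.p)=\mathcal{I}(Y_\infty\wedge p)=\{a\}$ and the obvious matching of single $a$-transitions --- so that $p\approx_{RS}Y_\infty\wedge a.p$. Since $p$ is stable and consistent, $p$ itself is its unique stable $F$-free $\epsilon$-derivative; feeding this and the unique such derivative $Y_\infty\wedge a.p$ of $t_X\{p/X\}$ into the definition of $\sqsubseteq_{RS}$ in both directions yields $p=_{RS}t_X\{p/X\}$.

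Part (iii) follows because the construction is symmetric in $a$ and $b$: replaying (i)--(ii) with $a$ and $b$ interchanged (equivalently, applying the action-relabelling automorphism of $\text{CLL}_R$ swapping $a$ and $b$, which sends $t_X$ to an $=_{RS}$-equivalent term and $p$ to $q$) gives $q\notin F$ and $q=_{RS}t_X\{q/X\}$. For (iv): if $p=_{RS}q$ then $p\sqsubseteq_{RS}q$, so from $p\stackrel{\epsilon}{\Rightarrow}_F|p$ --- together with the fact that $q$ is its own unique stable $F$-free $\epsilon$-derivative --- we get $p\,\SRS\,q$, whence (RS4) forces $\{a\}=\mathcal{I}(p)=\mathcal{I}(q)=\{b\}$, a contradiction; hence $p\neq_{RS}q$, and the equation $X=_{RS}t_X$ has two distinct consistent solutions. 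I expect the only delicate point to be (i): one must enumerate precisely the predicate rules whose conclusion can match each term --- in particular discarding $Rp_{10}/Rp_{11}$ via the equal-ready-set observation --- so that the circular-dependency argument is airtight; everything after that is routine checking against the definitions.
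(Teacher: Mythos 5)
Your proposal is correct and follows essentially the same route as the paper: well-foundedness of proof trees to get $\langle X|X=a.X\rangle\notin F$, Rules $Rp_{10}/Rp_{11}$ plus (LTS1) to make the $b$-branch inconsistent, the identification $\langle X|X=a.X\rangle =_{RS}\langle Y|Y=a.Y\rangle\wedge a.\langle X|X=a.X\rangle$, symmetry for $\langle X|X=b.X\rangle$, and a ready-set mismatch for distinctness. The only difference is that you spell out details the paper leaves implicit (consistency of the conjunctions, the explicit stable ready simulation relations, and the (RS4) argument for $\not=_{RS}$), which is fine.
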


  In the remainder of this section, we intend to show that the recursive process $\langle X|X=t \rangle$ captures the extreme solution of the equation $X=t$.
  To this end, a number of results in \cite{Zhang14} are listed below.

  \begin{lemma}\label{L:ONE_ACTION_TAU}
    If $C_X\{p/X\} \stackrel{\tau}{\rightarrow} r$ then

    \noindent(1) either there exists $C_X'$ such that $r\equiv C_X'\{p/X\}$ and $C_X\{q/X\} \stackrel{\tau}{\rightarrow}C_X'\{q/X\}$ for any $q$,

    \noindent (2) or there exist $C_{X,Z}'$ and $p'$ such that $p \action{\tau} p'$, $r \equiv C_{X,Z}'\{p/X,p'/X\}$ and $C_X\{q/X\} \action{\tau} C_{X,Z}'\{q/X,q'/Z\}$ for any $q \action{\tau} q'$.
  \end{lemma}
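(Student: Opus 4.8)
The plan is to prove the statement by induction on the depth of the proof tree of $C_X\{p/X\} \stackrel{\tau}{\rightarrow} r$ in $Strip(\text{CLL}_R,M_{\text{CLL}_R})$, with a case analysis on the last SOS rule applied, which is dictated by the outermost syntactic shape of $C_X$. I use proof-tree depth rather than structural induction on $C_X$ because the recursion rule $Ra_{16}$ replaces $\langle W|E\rangle$ by the unfolded context $\langle t_W|E\rangle$, which is not a structural subterm; well-foundedness of the induction is then guaranteed by the very definition of $M_{\text{CLL}_R}$ via $Strip(\text{CLL}_R,M_{\text{CLL}_R})$. The base cases are $C_X\equiv 0$ and $C_X\equiv\bot$ (no $\tau$-transition, so the claim holds vacuously) and $C_X\equiv X$: here $C_X\{p/X\}\equiv p\stackrel{\tau}{\rightarrow}r$, and alternative (2) holds with $Z$ a fresh variable, $C_{X,Z}'\equiv Z$ and $p'\equiv r$, since then $r\equiv C_{X,Z}'\{p/X,p'/Z\}$ and $C_X\{q/X\}\equiv q\stackrel{\tau}{\rightarrow}q'\equiv C_{X,Z}'\{q/X,q'/Z\}$ for every $q\stackrel{\tau}{\rightarrow}q'$.

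For the inductive step I would split the applicable $\tau$-rules into two groups. The $p$-independent rules — $Ra_1$ with $\alpha\equiv\tau$ (so $C_X\equiv\tau.C_X^1$) and $Ra_9,Ra_{10}$ (so $C_X\equiv C_X^1\vee C_X^2$) — immediately give alternative (1): $r$ is a fixed immediate subcontext of $C_X$ applied to $p$, and exactly the same rule fires when $p$ is replaced by any $q$. The propagation rules $Ra_4,Ra_5$ ($C_X\equiv C_X^1\Box C_X^2$), $Ra_7,Ra_8$ ($C_X\equiv C_X^1\wedge C_X^2$) and $Ra_{11},Ra_{12}$ ($C_X\equiv C_X^1\parallel_A C_X^2$) each have a single positive premise $C_X^i\{p/X\}\stackrel{\tau}{\rightarrow}y_i$ whose proof tree is shorter; I apply the induction hypothesis to it and re-wrap. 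If it returns alternative (1) via some $D_X$, alternative (1) holds for $C_X$ with the same operator applied to $D_X$ and the untouched sibling; if it returns alternative (2) via some $D_{X,Z}$ and $p'$, alternative (2) holds for $C_X$ after the same re-wrapping, once $Z$ is chosen fresh for the sibling context. Crucially, no $\tau$-rule moves two copies of $p$ at once (the synchronising rules $Ra_6,Ra_{15}$ produce visible transitions only), which is exactly why one auxiliary variable $Z$ suffices in alternative (2). Finally, the recursion case $C_X\equiv\langle W|E\rangle$ with $W\in V$ (and $X\notin V$, since $X$ is free, though $X$ may occur free in $E$) is handled through $Ra_{16}$: $C_X\{p/X\}\equiv\langle W|E\{p/X\}\rangle\stackrel{\tau}{\rightarrow}r$ forces $\langle t_W|E\rangle\{p/X\}\stackrel{\tau}{\rightarrow}r$ for $W=t_W\in E$, whose proof tree is strictly shorter; I apply the induction hypothesis to the context $\langle t_W|E\rangle$ and push the resulting transition back through $Ra_{16}$, using that substitution commutes with the recursion binder, i.e. $\langle t_W|E\rangle\{q/X\}\stackrel{\tau}{\rightarrow}\cdot$ yields $\langle W|E\rangle\{q/X\}\stackrel{\tau}{\rightarrow}\cdot$.

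I expect the recursion case to be the main obstacle: it is what forces the proof-depth induction, and it requires a little care with the substitution bookkeeping, namely checking $\langle W|E\rangle\{p/X\}\equiv\langle W|E\{p/X\}\rangle$ and that the transition reconstructed for $C_X\{q/X\}$ is again an instance of $Ra_{16}$ over the induction hypothesis' conclusion for $\langle t_W|E\rangle\{q/X\}$. The remaining obligations — choosing $Z$ fresh so that $C_{X,Z}'\{q/X,q'/Z\}$ has the intended occurrences (and so that $C_X'$, resp. $C_{X,Z}'$, is a legitimate context in $X$, resp. in $X,Z$), and verifying that $r$ is literally the announced context instance in each case — are routine once the induction is organised this way. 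An entirely analogous argument, with the sole extra complication that $Ra_6$ and $Ra_{15}$ may move two copies of $p$ and hence require two auxiliary variables, handles the companion statement about $\stackrel{a}{\rightarrow}$ transitions.
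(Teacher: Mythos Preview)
Your induction on proof-tree depth with a case analysis on the outermost shape of $C_X$ is the standard and correct argument for this kind of decomposition lemma, and your treatment of the delicate points---that only one copy of $p$ can move under a $\tau$-step (since $Ra_6$ and $Ra_{15}$ are visible-only), that $Z$ must be chosen fresh for the untouched sibling, and that the recursion case forces proof-depth rather than structural induction together with the substitution identity $\langle t_W|E\rangle\{p/X\}\equiv\langle t_W\{p/X\}|E\{p/X\}\rangle$---is accurate.

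As for comparison: the present paper does not actually prove this lemma. It is listed among ``a number of results in \cite{Zhang14}'' quoted without proof, so there is no in-paper argument to compare against. Your proposal is exactly the kind of proof one expects in \cite{Zhang14}, and nothing in it is problematic.
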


  \begin{lemma}\label{L:ONE_ACTION_VISIBLE}
    Let $a \in Act$. If $C_X\{p/X\} \action{a} r$ then there exits $C_{X,\widetilde{Y}}'$ such that

    \noindent (1) $r \equiv C_{X,\widetilde{Y}}'\{p/X,\widetilde{p_Y'}/\widetilde{Y}\}$ for some $\widetilde{p_Y'}$ with $p \action{a} p_Y'$ for each $Y \in \widetilde{Y}$, and

    \noindent (2) if $C_X\{q/X\}$ is stable and for each $Y \in \widetilde{Y}$, $q \action{a} q_Y'$, then $C_X\{q/X\}\action{a} C_{X,\widetilde{Y}}'\{q/X,\widetilde{q_Y'}/\widetilde{Y}\}$.
  \end{lemma}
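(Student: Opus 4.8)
The plan is to argue by induction on the height of the proof tree witnessing $C_X\{p/X\}\action{a}r$ in $Strip(\text{CLL}_R,M_{\text{CLL}_R})$, with a primary split according to whether $C_X$ is the variable $X$ itself. I deliberately induct on derivation height rather than on the structure of $C_X$, because the recursion operator forces an ``unfolding'' step that does not decrease term size, so plain structural induction on $C_X$ would not go through.

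If $C_X\equiv X$ then $C_X\{p/X\}\equiv p$, and I take $\widetilde{Y}$ to be a single fresh variable $Y$, put $C'_{X,Y}\equiv Y$ and $p'_Y\equiv r$; then $r\equiv C'_{X,Y}\{p/X,p'_Y/Y\}$ with $p\action{a}p'_Y$, which is (1), and (2) is immediate since $C'_{X,Y}\{q/X,q'_Y/Y\}\equiv q'_Y$ while $q\action{a}q'_Y$ is exactly the hypothesis. If $C_X\not\equiv X$, then (as $X$ is the only variable that may occur free in $C_X$) $C_X$ is not a bare variable, so the head operator of $C_X$ is preserved by the substitution and, for a visible move, determines the last rule. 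The cases $C_X\equiv 0,\bot$ admit no transition; the case $C_X\equiv C^1_X\vee C^2_X$ is vacuous, since $C_X\{p/X\}$ is unstable and hence by $\tau$-purity (Theorem~\ref{L:LLTS}) performs no visible move, and for~(2) note $C_X\{q/X\}$ is likewise never stable. When $C_X\equiv \alpha.C^1_X$ a visible move requires $\alpha=a$, the last rule is $Ra_1$, $r\equiv C^1_X\{p/X\}$, and I take $C'_X\equiv C^1_X$ with $\widetilde{Y}$ empty. When $C_X\equiv C^1_X\Box C^2_X$ (last rule $Ra_2$ or $Ra_3$) or $C_X\equiv C^1_X\parallel_A C^2_X$ with $a\notin A$ (rule $Ra_{13}$ or $Ra_{14}$), exactly one side carries the $a$-move; I apply the induction hypothesis to the shorter derivation of that side's positive premise, reuse the context it yields (padded by the inert side in the parallel case, since $\parallel_A$ keeps it while $\Box$ discards it), and for~(2) observe that stability of $C_X\{q/X\}$ forces both sides stable, which simultaneously enables the induction hypothesis~(2) on the active side and re-supplies the negative premise about the inert side needed to re-fire the rule.

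The cases needing genuine bookkeeping are conjunction $C_X\equiv C^1_X\wedge C^2_X$ (rule $Ra_6$) and synchronising parallel $C_X\equiv C^1_X\parallel_A C^2_X$ with $a\in A$ (rule $Ra_{15}$): here $r$ decomposes as $r_1\wedge r_2$ (resp.\ $r_1\parallel_A r_2$) with $C^i_X\{p/X\}\action{a}r_i$, and the induction hypothesis applied to each premise yields a context over $X$ together with an auxiliary tuple $\widetilde{Y^i}$; after renaming the auxiliary variables I may assume $\widetilde{Y^1}$ and $\widetilde{Y^2}$ are disjoint and fresh, take $\widetilde{Y}$ to be their concatenation, and let $C'_{X,\widetilde{Y}}$ be the conjunction (resp.\ $\parallel_A$-composition) of the two. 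Freshness ensures that substituting the $\widetilde{Y^2}$-components leaves the first context untouched and vice versa, which is what makes the decomposition in~(1) go through; for~(2) I push stability of $C_X\{q/X\}$ down to both components (via $Ra_7,Ra_8$ or $Ra_{11},Ra_{12}$), apply the induction hypothesis~(2) to each premise separately, and recombine via $Ra_6$ (resp.\ $Ra_{15}$). Finally, the recursion case $C_X\equiv\langle W|E\rangle$ with $W=t_W\in E$ (rule $Ra_{16}$): the unique premise is $\langle t_W|E\rangle\{p/X\}\action{a}r$, whose derivation is strictly shorter, so the induction hypothesis applies to the context $\langle t_W|E\rangle$ (a legitimate context, its free variables still contained in $\{X\}$), and I transport the resulting context as $C'_{X,\widetilde{Y}}$. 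Part~(1) is immediate, and for~(2) I use that $C_X\{q/X\}$ is stable iff $\langle t_W|E\rangle\{q/X\}$ is (both amount to one $Ra_{16}$-unfolding), so the induction hypothesis~(2) supplies the move of $\langle t_W|E\rangle\{q/X\}$ and a final application of $Ra_{16}$ lifts it to $C_X\{q/X\}$.

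The recursion operator is the part I expect to cause trouble — not because any individual step is hard, but because it dictates the shape of the whole argument: one must commit from the outset to an induction on derivation height (or some equivalent well-founded measure on the stripped proof system, e.g.\ one exploiting guardedness), since ordinary induction on $C_X$ breaks at $Ra_{16}$. The only remaining subtlety is the disjointness/freshness discipline for $\widetilde{Y^1},\widetilde{Y^2}$ in the $\wedge$- and synchronising-$\parallel_A$-cases, together with the routine check in part~(2) that the negative premises of the $\Box$- and asynchronous-$\parallel_A$-rules are recoverable from the stability hypothesis on $C_X\{q/X\}$.
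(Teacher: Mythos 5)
The paper itself gives no proof of this lemma — it is quoted verbatim from \cite{Zhang14} — so there is nothing internal to compare against; judged on its own, your argument is correct and is the expected one: well-founded induction on the derivation of $C_X\{p/X\}\stackrel{a}{\rightarrow}r$ in the stripped system with a case split on whether $C_X\equiv X$ and otherwise on the last rule applied, which is exactly what handles the $Ra_{16}$ unfolding that defeats plain structural induction on $C_X$. Your treatment of the delicate points is sound: the fresh, disjoint auxiliary tuples in the $\wedge$ and synchronising $\parallel_A$ cases, recovering the negative premises of $Ra_2$, $Ra_3$, $Ra_{13}$, $Ra_{14}$ from stability of $C_X\{q/X\}$, and the observation that $\langle t_W|E\rangle$ (after substitution) is again a context over $X$ whose stability coincides with that of $\langle W|E\rangle$.
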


  \begin{lemma}\label{L:ONE_ACTION_VISIBLE_GUARDED}
    Let $X$ be guarded in $C_X$.
    If $C_X\{p/X\} \action{\alpha} r$ then there exists $B_X$ such that $r \equiv B_X\{p/X\}$ and $C_X\{q/X\} \action{\alpha}B_X\{q/X\}$ for any $q$.
  \end{lemma}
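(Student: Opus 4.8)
The plan is to prove Lemma~\ref{L:ONE_ACTION_VISIBLE_GUARDED} by induction on the height of the proof tree of $C_X\{p/X\} \action{\alpha} r$ in $Strip(\text{CLL}_R, M_{\text{CLL}_R})$, with a case analysis on the outermost syntactic form of $C_X$. The guiding intuition is that guardedness of $X$ forces every occurrence of the hole to lie underneath a prefix $\beta.(\cdot)$ or inside a disjunct of some $(\cdot)\vee(\cdot)$, so the first transition of $C_X\{p/X\}$ can only peel off the surrounding operators and guards --- it can never consume a move supplied by $p$ itself. Hence $r$ is produced from $C_X$ by pure ``operator bookkeeping'', which is exactly what the context $B_X$ must record, and this bookkeeping is insensitive to what is plugged into the hole.

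The routine cases go quickly. If $C_X$ is $0$ or $\bot$ there are no transitions, and $C_X\equiv X$ is excluded because $X$ is not guarded in itself. If $C_X\equiv\beta.C'$, the only move is $C_X\{p/X\}\action{\beta}C'\{p/X\}$ by $Ra_1$, so we may set $B_X\equiv C'$ --- here the prefix is the guard and nothing need be assumed about $C'$. If $C_X\equiv C_1\vee C_2$, then $Ra_9$/$Ra_{10}$ give $C_X\{p/X\}\action{\tau}C_i\{p/X\}$, so $B_X\equiv C_i$. If $C_X\equiv\langle Z|E\rangle$ with $X$ free and guarded in $E$, rule $Ra_{16}$ shortens the derivation to one for $\langle t_Z|E\rangle\{p/X\}\action{\alpha}r$; guardedness of $X$ is inherited by the one-step unfolding $\langle t_Z|E\rangle$, so the induction hypothesis supplies $B_X$, and $Ra_{16}$ transfers the matching $q$-transition back to $\langle Z|E\rangle\{q/X\}$. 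For the ``transparent'' operators $C_X\equiv C_1\odot C_2$ with $\odot\in\{\Box,\wedge,\parallel_A\}$ ($X$ being then guarded in both $C_i$), I would apply the induction hypothesis --- or, if one prefers, Lemma~\ref{L:ONE_ACTION_TAU} and Lemma~\ref{L:ONE_ACTION_VISIBLE} --- to the active component(s), obtain sub-contexts $B_i$ with the correct residuals, and reassemble $B_X$ from these together with the unchanged component.

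The step I expect to cost real work is the interaction with the side conditions of the SOS rules, in particular the substitution-independence of stability for guarded contexts. For instance $Ra_2$ fires the left $a$-move only when $C_2\{p/X\}\not\stackrel{\tau}{\rightarrow}$, and to export the analogous $q$-transition I must know that $C_2\{q/X\}\not\stackrel{\tau}{\rightarrow}$ as well; likewise, for a visible action Lemma~\ref{L:ONE_ACTION_VISIBLE}(2) requires $C_X\{q/X\}$ to be stable. I would dispatch this by carrying, through the same induction, the companion claim that a guarded context has the same stability under any substitution: inspecting each $\tau$-rule that could apply to $C\{q/X\}$, the induction hypothesis shows the move is again bookkeeping and hence mirrored by $C\{p/X\}$, the only ``via the hole'' alternative being ruled out by guardedness exactly as in the main argument (equivalently, this is where alternative~(2) of Lemma~\ref{L:ONE_ACTION_TAU} is discarded).

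With the companion claim in hand the proof closes smoothly. In the visible case $C_X\{p/X\}$ is itself stable by $\tau$-purity (Theorem~\ref{L:LLTS}), hence so is $C_X\{q/X\}$; Lemma~\ref{L:ONE_ACTION_VISIBLE} then forces its tuple $\widetilde Y$ of ``hole-fired'' occurrences to be empty, its residual is already of the form $C'_X\{p/X\}$, and part~(2) yields $C_X\{q/X\}\action{a}C'_X\{q/X\}$, so $B_X\equiv C'_X$. The $\tau$-case is just the bare alternative~(1) of Lemma~\ref{L:ONE_ACTION_TAU}. The one structural subtlety worth flagging throughout is the recursion construct $\langle Z|E\rangle$: plain structural induction on $C_X$ fails there, since the unfolding $\langle t_Z|E\rangle$ is not syntactically smaller, which is precisely why the induction is organised on derivation height from the outset.
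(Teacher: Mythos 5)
The paper itself gives no proof of Lemma~\ref{L:ONE_ACTION_VISIBLE_GUARDED} (it is quoted from \cite{Zhang14}), so your proposal has to stand on its own. Your overall plan is the right one: induction on the height of the derivation of $C_X\{p/X\}\action{\alpha}r$ with a case analysis on the outermost operator, the recursion case handled through $Ra_{16}$ together with the (correct, and necessary) observation that guardedness of $X$ survives one-step unfolding, and the explicit recognition that the real work lies in the stability side conditions of $Ra_2$, $Ra_3$, $Ra_{13}$, $Ra_{14}$. The routine cases in your second paragraph are all handled correctly.

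There are, however, two steps that do not go through as you state them. First, the companion claim (substitution-independence of stability of a guarded context) cannot literally be ``carried through the same induction'': to discharge the negative premise of, say, $Ra_2$ you must derive $C_2\{q/X\}\not\stackrel{\tau}{\rightarrow}$ from $C_2\{p/X\}\not\stackrel{\tau}{\rightarrow}$, i.e.\ refute a hypothetical transition $C_2\{q/X\}\action{\tau}s$, and the proof tree of that transition is a different derivation whose height is in no way bounded by your induction measure. The standard repair is to restructure: first prove the $\tau$-instance of the lemma by its own induction on the derivation --- this is unproblematic precisely because no $\tau$-rule of $\text{CLL}_R$ carries a negative premise --- deduce that for guarded $C$ one has $C\{p/X\}\action{\tau}$ iff $C\{q/X\}\action{\tau}$, and only then prove the visible-action instance, using this corollary for the side conditions. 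Second, the closing paragraph leans on Lemmas~\ref{L:ONE_ACTION_TAU} and \ref{L:ONE_ACTION_VISIBLE} as black boxes in a way their statements do not support: Lemma~\ref{L:ONE_ACTION_TAU} only offers a disjunction and nothing in its wording guarantees that alternative (1) is available for a guarded context, and Lemma~\ref{L:ONE_ACTION_VISIBLE} nowhere asserts that $\widetilde Y$ can be taken empty when $X$ is guarded (contrast clause (3) of Lemma~\ref{L:MULTI_TAU_GF_STABLE}, which adds exactly such a guardedness clause); with a nonempty $\widetilde Y$ its part (2) yields the matching transition only for those $q$ that themselves have $a$-moves, not ``for any $q$'' as the present lemma requires. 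The fact that the hole-dependent decomposition can be avoided under guardedness is precisely the content of the lemma being proved, so it must come from your direct case analysis (where it does come out correctly), not from citing those two lemmas; drop those appeals and let the two-stage induction do the work.
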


  \begin{lemma}\label{L:MULTI_TAU_GF_STABLE}
    If $C_X\{p/X\}\sactions{\epsilon}r$ then there exist stable $C_{X,\widetilde{Y}}'$ and stable $p_Y'$ for each $Y \in \widetilde{Y}$ such that
    \noindent (1) $p \sactions{\tau}p_Y'$ for each $Y \in \widetilde{Y}$ and $r\equiv C_{X,\widetilde{Y}}'\{p/X,\widetilde{p_Y'}/\widetilde{Y}\}$;
    \noindent (2) for any $q$ such that $q \action{\tau}$ iff $p\action{\tau}$, if $q \sactions{\tau}q_Y'$ for each $Y \in \widetilde{Y}$ then $C_X\{q/X\}\sactions{\epsilon}C_{X,\widetilde{Y}}'\{q/X,\widetilde{q_Y'}/\widetilde{Y}\}$;
    \noindent (3) if $X$ is strongly guarded in $C_X$ then so it is in $C_{X,\widetilde{Y}}'$ and $\widetilde{Y}= \emptyset$.
  \end{lemma}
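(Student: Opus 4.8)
The plan is to argue by induction on the length $n$ of a $\tau$-path $C_X\{p/X\}\action{\tau}s_1\action{\tau}\cdots\action{\tau}s_n\equiv r$ witnessing $C_X\{p/X\}\sactions{\epsilon}r$. Since a single $\tau$-step can turn an occurrence of $p$ into an occurrence of a $\tau$-derivative (case (2) of Lemma~\ref{L:ONE_ACTION_TAU}), the intermediate terms $s_i$ are no longer of the shape $C\{p/X\}$; so I would first prove, by this induction, the many-hole strengthening of which the lemma is the instance $\widetilde{X}=(X)$: for any context $D_{\widetilde{X}}$ and processes $\widetilde{p}$, if $D_{\widetilde{X}}\{\widetilde{p}/\widetilde{X}\}\sactions{\epsilon}r$, then there are a stable context $D'_{\widetilde{X},\widetilde{Y}}$, an index $i_Y$ for each $Y\in\widetilde{Y}$, and stable $p'_Y$ with $p_{i_Y}\sactions{\tau}p'_Y$, such that $r\equiv D'_{\widetilde{X},\widetilde{Y}}\{\widetilde{p}/\widetilde{X},\widetilde{p'_Y}/\widetilde{Y}\}$ and, for every $\widetilde{q}$ with $q_i\action{\tau}$ iff $p_i\action{\tau}$ (each $i$) and every choice of stable $q'_Y$ with $q_{i_Y}\sactions{\tau}q'_Y$, one has $D_{\widetilde{X}}\{\widetilde{q}/\widetilde{X}\}\sactions{\epsilon}D'_{\widetilde{X},\widetilde{Y}}\{\widetilde{q}/\widetilde{X},\widetilde{q'_Y}/\widetilde{Y}\}$.

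For the base case $n=0$ (so $D_{\widetilde{X}}\{\widetilde{p}/\widetilde{X}\}$ is already stable and $r$ equals it) put $\widetilde{Y}=\emptyset$ and $D'=D_{\widetilde{X}}$. What must then be checked is that $D_{\widetilde{X}}$ is a stable context and that $D_{\widetilde{X}}\{\widetilde{q}/\widetilde{X}\}$ is stable for the $\widetilde{q}$ appearing in the transfer clause. Both follow from a contrapositive use of (the many-hole form of) Lemma~\ref{L:ONE_ACTION_TAU}: a $\tau$-move of $D_{\widetilde{X}}\{\widetilde{0}/\widetilde{X}\}$ or of $D_{\widetilde{X}}\{\widetilde{q}/\widetilde{X}\}$ is either a context-move, which replays for every filling and hence forces $D_{\widetilde{X}}\{\widetilde{p}/\widetilde{X}\}\action{\tau}$, contradicting stability; or it arises from a $\tau$-step of a substituted process, which is impossible for $0$ and which, for $q_i$ (having the same $\tau$-stability as $p_i$), again produces $D_{\widetilde{X}}\{\widetilde{p}/\widetilde{X}\}\action{\tau}$. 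I would reuse this ``stability transfer'' argument at the end of the inductive step to see that the $\widetilde{q}$-side endpoint $D'_{\widetilde{X},\widetilde{Y}}\{\widetilde{q}/\widetilde{X},\widetilde{q'_Y}/\widetilde{Y}\}$ constructed there is again stable (using that $r$, the $q'_Y$, and the stability of $q_i$ versus $p_i$).

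For the inductive step peel off $D_{\widetilde{X}}\{\widetilde{p}/\widetilde{X}\}\action{\tau}s_1$ and apply Lemma~\ref{L:ONE_ACTION_TAU}. In case (1), $s_1\equiv D^1_{\widetilde{X}}\{\widetilde{p}/\widetilde{X}\}$ with a plug-independent context-move $D_{\widetilde{X}}\{\widetilde{q}/\widetilde{X}\}\action{\tau}D^1_{\widetilde{X}}\{\widetilde{q}/\widetilde{X}\}$; applying the induction hypothesis to $D^1_{\widetilde{X}}$ on the path of length $n-1$ and prepending this first step settles it, since a $\tau$-step prefixed to a $\sactions{\epsilon}$-move is again a $\sactions{\epsilon}$-move. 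In case (2), $s_1\equiv D^1_{\widetilde{X},Z}\{\widetilde{p}/\widetilde{X},p'/Z\}$ with $p_j\action{\tau}p'$ for some $j$. The decisive observation is that such a copy of $p'$ is created at an \emph{unguarded} position — a $\tau$-step of $p_j$ is usable in deriving $s_1$ only when the relevant occurrence of $X_j$ in the current context lies under no prefix and in no disjunct — and transitions never introduce guards, so along the rest of the path this occurrence can neither slip under a guard nor be discarded; as $r$ is stable it must therefore come to rest holding a stable derivative of $p_j$, which becomes an auxiliary variable with index $j$ (using $p_j\action{\tau}p'\sactions{\tau}(\cdot)$, which composes to $p_j\sactions{\tau}(\cdot)$). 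Accordingly one invokes the induction hypothesis for the path out of $s_1$ with the enlarged hole-tuple, re-indexes the variables spawned from $Z$ back to index $j$, prepends the first $\tau$-step, and on the $\widetilde{q}$-side routes the new copy from $q_j$ to its designated stable derivative along the supplied path $q_j\sactions{\tau}q'_Y$, interleaving it with the replayed context-moves into one weak $\tau$-path; the $\widetilde{q}$-side need not proceed in lockstep with the $\widetilde{p}$-side.

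Finally, for part (3): if $X$ is strongly guarded in $C_X$ then $X$ is guarded, so Lemma~\ref{L:ONE_ACTION_VISIBLE_GUARDED} applies to every step of the path and case (2) of Lemma~\ref{L:ONE_ACTION_TAU} never occurs; hence no hole is ever split, $\widetilde{Y}=\emptyset$, and $C'_{X,\widetilde{Y}}=C'_X$. A routine induction on the derivation of a single $\tau$-step — inspecting the $\tau$-producing rules $Ra_1$, $Ra_4$, $Ra_5$, $Ra_7$--$Ra_{12}$, $Ra_{16}$ (the last being irrelevant since recursion bodies are closed) — shows that strong guardedness of $X$ is inherited by the residual context supplied by Lemma~\ref{L:ONE_ACTION_VISIBLE_GUARDED}, so it propagates along the whole path, yielding $X$ strongly guarded in $C'_X$. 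I expect the main obstacle to be exactly the bookkeeping of case (2) in the strengthened statement: keeping track of the provenance (index) of each auxiliary variable so that the transfer property can be phrased with the right per-hole side conditions, reconciling the $\tau$-stability side conditions when the transfer is composed across a case-(2) step, and verifying that the context-moves and the derivative-copy trajectories always fit together into a single weak $\tau$-transition landing in a stable state.
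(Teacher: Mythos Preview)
The paper does not prove this lemma: it is quoted from \cite{Zhang14} together with Lemmas~\ref{L:ONE_ACTION_TAU}--\ref{L:ONE_ACTION_VISIBLE_GUARDED} as imported background for the new material in Section~3, so there is no in-paper argument to compare against.

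Your plan is the standard one for such decomposition results and is essentially what one would expect the proof in \cite{Zhang14} to look like: induct on the length of the witnessing $\tau$-path, pass to a many-hole statement so that the hypothesis survives the case-(2) branch of Lemma~\ref{L:ONE_ACTION_TAU}, and handle part~(3) via Lemma~\ref{L:ONE_ACTION_VISIBLE_GUARDED}. The base case and the stability-transfer observations are correct.

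The only place that needs tightening is precisely the one you flag. When $p_j\action{\tau}p'$ and you apply the induction hypothesis to $D^1_{\widetilde X,Z}$ with fillings $(\widetilde p,p')$, the hypothesis returns a stable $D'_{\widetilde X,Z,\widetilde{Y'}}$ in which the $Z$-slot still carries $p'$ itself. Promoting $Z$ to an auxiliary variable of index $j$ requires $p'$ to be stable (so that $p_j\sactions{\tau}p'$). Your ``$Z$ is created unguarded and cannot slip under a guard or be discarded'' remark is the right reason this holds, but it is not a consequence of the many-hole statement as you wrote it; it is an extra invariant that must be carried through the induction. Either add to the strengthened hypothesis that each original hole $X_i$ with unstable $p_i$ occurs only unguarded in $D'$ (this is preserved by both branches of Lemma~\ref{L:ONE_ACTION_TAU}), or, as you suggest, bypass the black-box transfer on the $\widetilde q$-side and drive the $Z$-copy from $q_j$ to its designated $q'_Y$ \emph{first} using that $Z$ is unguarded in $D^1$, then replay the remaining context moves. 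Both routes work, but the current strengthened statement alone does not quite deliver the step.
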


  Before giving the main result of this section, we prove a lemma concerning $F$-predicate.

  \begin{lemma}\label{L:precongruence_F}
    If $X$ is strongly guarded in $t_X$ and $p\sqsubseteq_{RS}t_X\{p/X\}$ then for any $C_Y$, $C_Y\{t_X\{p/X\}/Y\} \notin F$ implies $C_Y\{\langle X|X=t_X \rangle/Y\} \notin F$.
  \end{lemma}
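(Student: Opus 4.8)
Write $q\equiv\langle X|X=t_X\rangle$ and $p_1\equiv t_X\{p/X\}$. By Rule $Ra_{16}$ and Lemma~\ref{L:F_NORMAL}(6), $q$ has exactly the transitions and the $F$-status of $t_X\{q/X\}$, so the two may be used interchangeably inside any context. I would prove the equivalent contrapositive in a mildly strengthened form: for every context $C$ with holes $\widetilde W$, if filling every hole by $q$ gives a term in $F$, then filling every hole by $p_1$ also gives a term in $F$; the Lemma is the one-hole case. The argument is a well-founded induction on the proof tree of $C\{q/\widetilde W\}\,F$ in $Strip(\text{CLL}_R,M_{\text{CLL}_R})$, casing on the last predicate rule, equivalently on the outermost operator of $C$ (together with the separate case $C\equiv W_i$, where the term is $q$ and the last rule is $Rp_{14}$ or $Rp_{15}$).

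Two facts stand behind the cases. (a) As $X$ is strongly guarded in $t_X$, every occurrence of the hole in $t_X$ lies under a visible prefix, so the ready set and stability of $t_X\{r/X\}$ do not depend on $r$; hence $\mathcal I(q)=\mathcal I(p_1)$ and $q$ is stable iff $p_1$ is, and, propagating this through an arbitrary context by a routine structural induction over the SOS rules, $\mathcal I(C\{q/\widetilde W\})=\mathcal I(C\{p_1/\widetilde W\})$ with matching stability. Moreover a $\tau$-computation of $t_X\{r/X\}$ never fires the hole, so (iterating Lemma~\ref{L:ONE_ACTION_VISIBLE_GUARDED}) it is ``skeletal'' and lifts verbatim to $t_X\{r'/X\}$ for any $r'$, preserving strong guardedness and stability of its endpoint. (b) From the hypothesis $p\sqsubseteq_{RS}t_X\{p/X\}=p_1$ and Theorem~\ref{L:precongruence} we get $G\{p/\widetilde U\}\sqsubseteq_{RS}G\{p_1/\widetilde U\}$ for every context $G$; and, since $LTS(\text{CLL}_R)$ is an LLTS (Theorem~\ref{L:LLTS}), if $a\sqsubseteq_{RS}b$ and $a\notin F$ then, using (LTS2) for $a$ and the definition of $\sqsubseteq_{RS}$, also $b\notin F$. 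Hence the ``upward $F$-transfer'' $G\{p_1/\widetilde U\}\in F\Rightarrow G\{p/\widetilde U\}\in F$.

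The routine cases are immediate: $Rp_1$, the case of a closed context, and the structural rules $Rp_2$--$Rp_9$ reduce to the claim for a sub-context with a strictly smaller proof and are re-assembled by Lemma~\ref{L:F_NORMAL}(1)--(5); $Rp_{10},Rp_{11}$ use only (a). If $C\equiv W_i$ and $q\,F$ is derived by $Rp_{14}$, its premise is $t_X\{q/X\}\,F$; the induction hypothesis for the context $t_X$ gives $t_X\{p_1/X\}\in F$, and upward $F$-transfer yields $t_X\{p/X\}\equiv p_1\in F$. If it is derived by $Rp_{15}$, the stable $\tau$-descendants of $p_1=t_X\{p/X\}$ are exactly the $B\{p/X\}$ matching the stable $\tau$-descendants $B\{q/X\}$ of $q=t_X\{q/X\}$ (fact (a)); each $B\{q/X\}\in F$ by the premise, so $B\{p/X\}\in F$ by the induction hypothesis and upward transfer, whence all stable $\tau$-descendants of $p_1$ are in $F$ and (LTS2) gives $p_1\in F$. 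The crux is $Rp_{12}$ (all $\alpha$-derivatives of a conjunction in $F$) and $Rp_{13}$ (all stable $\tau$-descendants of a conjunction in $F$), where a visible step can expose an occurrence of $q$ on the $q$-side but only the bare $p$ on the $p_1$-side. For any $\alpha$-derivative (resp.\ stable $\tau$-descendant) $r$ of $C\{p_1/\widetilde W\}$ I would decompose it, by Lemmas~\ref{L:ONE_ACTION_TAU},~\ref{L:ONE_ACTION_VISIBLE},~\ref{L:ONE_ACTION_VISIBLE_GUARDED} and~\ref{L:MULTI_TAU_GF_STABLE}, as $r\equiv D\{p_1/\widetilde W,\ \widetilde{B_j\{p/X\}}/\widetilde{Z_j}\}$, where each $B_j$ (produced by the $X$-guarded lemmas applied to $t_X$) also satisfies the $q$-side match $q\,\action{\alpha}\,B_j\{q/X\}$ (resp.\ $q\,\sactions{\epsilon}\,B_j\{q/X\}$); fact (a) transfers between the two fillings the transition (resp.\ set of stable descendants) the rule requires. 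The $q$-side term $D\{q/\widetilde W,\ \widetilde{B_j\{q/X\}}/\widetilde{Z_j}\}$ lies in $F$ by the rule's premise, its $F$-proof being a strict sub-proof; the induction hypothesis for the context ``$D$ with each $Z_j$-hole replaced by a copy of $B_j$'' gives $D\{p_1/\widetilde W,\ \widetilde{B_j\{p_1/X\}}/\widetilde{Z_j}\}\in F$; and upward $F$-transfer in the $B_j$-holes yields $r\in F$. Reassembling with $Rp_{12}$ (resp.\ $Rp_{13}$) gives $C\{p_1/\widetilde W\}\in F$. ($Rp_{15}$ for a recursion constant of $C$ is analogous, or trivial if that constant is closed.)

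The main obstacle I expect is exactly this bookkeeping: phrasing the induction hypothesis for arbitrary multi-hole contexts so that ``absorbing $B_j$ into the context'' is legitimate, and recognising that the mismatch ``$q$ exposed on one side, $p$ on the other'' is repaired precisely by precongruence together with $p\sqsubseteq_{RS}t_X\{p/X\}$, which brings the obligation back to its inductive shape one transition deeper so that well-foundedness of the proof tree closes the induction. A secondary delicate point is the divergence rules $Rp_{13},Rp_{15}$: one must transport a whole $\tau$-computation ending in a stable state from the $p$-filling of $t_X$ to its $q$-filling, which is sound only because strong guardedness makes that computation skeletal; note that the outer use of Lemma~\ref{L:MULTI_TAU_GF_STABLE} is made with the fillings $p_1$ and $q$, whose $\tau$-capabilities agree (both governed by the skeleton of $t_X$), so its clause (2) applies there.
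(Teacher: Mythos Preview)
Your proposal is correct and follows essentially the same approach as the paper: both argue by well-foundedness of proof trees in $Strip(\text{CLL}_R,M_{\text{CLL}_R})$, perform the same case analysis on the last predicate rule, and use the same key ingredients---Lemmas~\ref{L:ONE_ACTION_TAU}--\ref{L:MULTI_TAU_GF_STABLE} for decomposing transitions through the context, strong guardedness of $X$ in $t_X$ to make the hole-filling irrelevant for $\tau$-behaviour and ready sets, and Theorem~\ref{L:precongruence} together with $p\sqsubseteq_{RS}t_X\{p/X\}$ to repair the ``$q$ versus $p$'' mismatch after a visible step exposes the hole. The only cosmetic differences are that the paper phrases the argument as an infinite-descent contradiction on the set $\Omega=\{B_Y\{\langle X|X=t_X\rangle/Y\}:B_Y\{t_X\{p/X\}/Y\}\notin F\}$ rather than as a direct induction, and keeps a single named hole $Y$ (with possibly many occurrences) where you use an explicit tuple $\widetilde W$.
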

  \begin{proof}
  Clearly, by Lemmas~\ref{L:ONE_ACTION_TAU}, \ref{L:ONE_ACTION_VISIBLE} and \ref{L:ONE_ACTION_VISIBLE_GUARDED}, we get
  \[C_Y\{t_X\{p/X\}/Y\}\action{\alpha}\;\text{iff}\;C_Y\{\langle X|X=t_X \rangle /Y\}\action{\alpha}\;\text{for any}\;C_Y. \tag{\ref{L:precongruence_F}.1}\]
    Set $\Omega \triangleq \{B_Y\{\langle X|X=t_X \rangle/Y\}:B_Y\{t_X\{p/X\}/Y\} \notin F\}$.
    Clearly, it suffice to prove that $F \cap \Omega = \emptyset$.
    Conversely, suppose that $F \cap \Omega \neq \emptyset$.
    Due to the well-foundedness of proof trees, to complete the proof, it is sufficient to show that, for each $C_Y\{\langle X|X=t_X \rangle /Y\} \in \Omega$, any proof tree for $Strip(\text{CLL},M_{\text{CLL}_R})\vdash C_Y\{\langle X|X=t_X \rangle /Y\}F$ has a proper subtree with root $sF$ for some $s \in \Omega$.
    We shall prove this as follows.
    Let $\mathcal T$ be any proof tree of $C_Y\{\langle X|X=t_X \rangle /Y\}F$.
    It is a routine case analysis based on the last rule applied in $\mathcal T$.
    We treat only non-trivial three cases and leave the others to the reader.

    \noindent Case 1. $C_Y \equiv Y$.

    Then $C_Y\{\langle X|X=t_X \rangle /Y\} \equiv \langle X|X=t_X \rangle$. So the last rule applied in $\mathcal T$ is $\frac{\langle t_X|X=t_X \rangle F}{\langle X|X=t_X \rangle F}$ or $\frac{\{rF:\langle X|X=t_X \rangle \sactions{\epsilon}r\}}{\langle X|X=t_X \rangle F}$.

    For the former, since  $C_Y\{t_X\{p/X\}/Y\} \equiv t_X\{p/X\} \notin F$ and $p\sqsubseteq_{RS}t_X\{p/X\}$, we have $t_X\{t_X\{p/X\}/X\} \notin F$ due to Theorem~\ref{L:precongruence}.
    Hence $\langle t_X|X=t_X\rangle \equiv t_X\{\langle X|X=t_X \rangle/X\} \in \Omega$.
    For the latter, we treat the non-trivial subcase that $\langle X|X=t_X \rangle \action{\tau}$.
    Since $t_X\{p/X\} \notin F$, $t_X\{p/X\}\fsactions{\epsilon}s$ for some $s$.
    For this transition, since $X$ is strongly guarded in $t_X$, by Lemma~\ref{L:MULTI_TAU_GF_STABLE}, there exists a stable $t_X'$ with strongly guarded $X$ such that $s \equiv t_X'\{p/X\}$ and $t_X\{\langle X|X=t_X\rangle\} \actions{\epsilon} t_X'\{\langle X|X=t_X \rangle /X\}$.
    Further, by Lemma~\ref{L:ONE_ACTION_VISIBLE_GUARDED}, $\langle X|X=t_X\rangle \sactions{\tau} t_X'\{\langle X|X=t_X \rangle/X\}$ due to $\langle X|X=t_X\rangle \action{\tau}$.
    Moreover $t_X'\{t_X\{p/X\}/X\} \notin F$ because of $s\equiv t_X'\{p/X\} \notin F$ and $p \RS t_X\{p/X\}$.
    Hence $t_X'\{\langle X|X=t_X \rangle /X\} \in \Omega$, as desired.\\

    \noindent Case 2. $C_Y \equiv \langle Z|E \rangle$.

    The last rule applied in $\mathcal T$ is one of following two cases: $\frac{\langle t_Z|E \rangle \{\langle X|X=t_X \rangle /Y\} F}{\langle Z|E \rangle \{\langle X|X=t_X \rangle /Y\} F}$ or $\frac{\{rF:\langle Z|E\rangle \{\langle X|X=t_X \rangle /Y\} \sactions{\epsilon}r\}}{\langle Z|E \rangle \{\langle X|X=t_X \rangle /Y\} F}$.

    By Lemma~\ref{L:F_NORMAL}(6), the former is easy to handle and omitted. Next we treat the latter.
    Since $C_Y\{t_X\{p/X\}/Y\} \notin F$, $C_Y\{t_X\{p/X\}/Y\}  \fsactions{\epsilon}s$ for some $s$.
    For this transition, by Lemma~\ref{L:MULTI_TAU_GF_STABLE}, there exist stable $C_{Y,\widetilde{W}}'$ and $\widetilde{s_W'}$ such that $s\equiv C_{Y,\widetilde{W}}'\{t_X\{p/X\},\widetilde{s_W'}/\widetilde{W}\}$ and $t_X\{p/X\} \sactions{\tau} s_W'$ for each $W\in \widetilde{W}$.
    Further, for each $t_X\{p/X\} \sactions{\tau}s_W'$, there exists stable $t_X'^W$ with strongly guarded $X$ such that $s_W' \equiv t_X'^W\{p/X\}$ and $t_X\{\langle X|X=t_X \rangle /X\} \actions{\tau} t_X'^W\{\langle X|X=t_X\rangle/X\}$.
    So, by Lemma~\ref{L:ONE_ACTION_VISIBLE_GUARDED}, $\langle X|X=t_X \rangle   \sactions{\tau} t_X'^W\{\langle X|X=t_X\rangle/X\}$ for each $W \in \widetilde{W}$ and hence $C_Y\{\langle X|X=t_X\rangle/Y\}  \sactions{\epsilon} C_{Y,\widetilde{W}}'\{\langle X|X=t_X \rangle/Y,\widetilde{t_X'^W\{\langle X|X=t_X \rangle /X\}}/\widetilde{W}\}\equiv u$.
    Since $s\equiv C_{Y,\widetilde{W}}'\{t_X\{p/X\}/Y,\widetilde{t_X'^W\{p/X\}}/\widetilde{W}\} \notin F$ and $p \RS t_X\{p/X\}$, we get $C_{Y,\widetilde{W}}'\{t_X\{p/X\}/Y,\widetilde{t_X'^W\{t_X\{p/X\}/X\}}/\widetilde{W}\} \notin F$, which implies $u \in \Omega$, as desired.\\

 \noindent   Case 3. $C_Y \equiv B_Y \wedge D_Y$.

 We split the argument into the following four subcases.

 \noindent Case 3.1. $\frac{B_Y\{\langle X|X=t_X \rangle/Y\}F}{C_Y\{\langle X|X=t_X \rangle/Y\}F}$.

 Since $C_Y\{t_X\{p/X\}/Y\} \notin F$, $B_Y\{t_X\{p/X\}/Y\} \notin F$ by Lemma~\ref{L:F_NORMAL}.
 So, $B_Y\{\langle X|X=t_X \rangle/Y\} \notin F$, as desired.\\

 \noindent Case 3.2. $\frac{B_Y\{\langle X|X=t_X \rangle/Y\} \action{a}}{C_Y\{\langle X|X=t_X \rangle/Y\}F}$ with $D_Y\{\langle X|X=t_X \rangle/Y\} \not\action{a}$ and $C_Y\{\langle X|X=t_X \rangle/Y\}\not\action{\tau}$.

 By (\ref{L:precongruence_F}.1), a contradiction arises due to $C_Y\{t_X\{p/X\}/Y\} \notin F$.\\

 \noindent Case 3.3. $\frac{\{rF:C_Y\{\langle X|X=t_X \rangle/Y\} \sactions{\epsilon}r\}}{C_Y\{\langle X|X=t_X \rangle/Y\}F}$.

 Similar to the second case of Case~2, omitted.\\

 \noindent Case 3.4. $\frac{C_Y\{\langle X|X=t_X \rangle/Y\} \action{\alpha}r', \{rF:C_Y\{\langle X|X=t_X \rangle/Y\} \action{\alpha}r\}}{C_Y\{\langle X|X=t_X \rangle/Y\}F}$.

 Then $C_Y\{\langle X|X=t_X \rangle/Y\} \action{\alpha}r'$. Since $C_Y\{t_X\{p/X\}/Y\} \notin F$, by (\ref{L:precongruence_F}.1),
 \[C_Y\{t_X\{p/X\}/Y\} \faction{\alpha}s\;\text{for some}\;s. \tag{\ref{L:precongruence_F}.2}\]
 In the following, we treat two cases based on $\alpha$.

 \noindent Case~3.4.1. $\alpha=\tau$.

    For (\ref{L:precongruence_F}.2), by Lemma~\ref{L:ONE_ACTION_TAU}, either $s\equiv C_Y'\{t_X\{p/X\}/Y\}$ for some $C_Y'$ such that $C_Y\{q/Y\} \action{\tau} C_Y'\{q/Y\}$ for any $q$, or there exist $s'$ and $C_{Y,Z}'$ such that $s\equiv C_{Y,Z}'\{t_X\{p/X\}/Y,s'/Z\}$ and $t_X\{p/X\} \action{\tau} s'$.
    For the former, it is trivial.
    Next we treat the later.
    For $t_X\{p/X\} \action{\tau} s'$, since $X$ is strongly guarded in $t_X$, by Lemma~\ref{L:ONE_ACTION_VISIBLE_GUARDED}, there exists $t_X'$ such that $s'\equiv t_X'\{p/X\}$ and $t_X\{\langle X|X=t_X\rangle/X\} \action{\tau} t_X'\{\langle X|X=t_X\rangle/X\}$.
    Then $\langle X|X=t_X\rangle \action{\tau} t_X'\{\langle X|X=t_X\rangle/X\}$ and hence $C_Y\{\langle X|X=t_X \rangle /Y\}\action{\tau} C_{Y,Z}'\{\langle X|X=t_X \rangle /Y,t_X'\{\langle X|X=t_X\rangle/X\}/Z\}\equiv u$.
    Since $p \RS t_X\{p/X\}$ and $s\equiv C_{Y,Z}'\{t_X\{p/X\}/Y,t_X'\{p/X\}/Z\} \notin F$, we get $C_{Y,Z}'\{t_X\{p/X\}/Y,t_X'\{t_X\{p/X\}/X\}/Z\} \notin F$.
    Clearly, $u\in \Omega$, as desired.\\

 \noindent Case~3.4.2. $\alpha \in Act$.

 For (\ref{L:precongruence_F}.2), by Lemma~\ref{L:ONE_ACTION_VISIBLE}, $s \equiv C_{Y,\widetilde{Z}}'\{t_X\{p/X\},\widetilde{s_Z'}/\widetilde{Z}\}$ for some $C_{Y,\widetilde{Z}}'$ and $\widetilde{s_Z'}$ such that $t_X\{p/X\} \action{\alpha} s_Z'$ for each $Z \in \widetilde{Z}$.
 Since $X$ is strongly guarded in $t_X$, for each $t_X\{p/X\} \action{\alpha} s_Z'$, by Lemma~\ref{L:ONE_ACTION_VISIBLE_GUARDED}, there exists $t_X'^Z$ such that $s_Z' \equiv t_X'^Z\{p/X\}$ and $t_X\{\langle X|X=t_X\rangle/X\} \action{\alpha} t_X'^Z\{\langle X|X=t_X \rangle/X\}$.
 Then $\langle X|X=t_X\rangle \action{\alpha} t_X'^Z\{\langle X|X=t_X \rangle/X\}$ for each $Z \in \widetilde{Z}$ and hence $C_Y\{\langle X|X=t_X \rangle/Y\} \action{\alpha} C_{Y,\widetilde{Z}}'\{\langle X|X=t_X \rangle /Y, \widetilde{t_X'^Z\{\langle X|X=t_X \rangle /X \}}/\widetilde{Z}\} \equiv u$ by (\ref{L:precongruence_F}.1).
 Since $p \RS t_X\{p/X\}$ and $s \equiv C_{Y,\widetilde{Z}}'\{t_X\{p/X\},\widetilde{t_X'^Z\{p/X\}}/\widetilde{Z}\} \notin F$, by Theorem~\ref{L:precongruence}, we get $C_{Y,\widetilde{Z}}'\{t_X\{p/X\},\widetilde{t_X'^Z\{t_X\{p/X\}/X\}}/\widetilde{Z}\} \notin F$.
 Clearly, $u \in \Omega$, as desired.
  \end{proof}

  Next we recall an equivalent formulation of $\sqsubseteq_{RS}$ and an up-to technique.

  \begin{mydefn}\label{D:ALT_RS}
A relation ${\mathcal R} \subseteq T(\Sigma_{\text{CLL}_R})\times T(\Sigma_{\text{CLL}_R})$ is an alternative ready simulation relation, if for any $(p,q) \in {\mathcal R}$ and $a \in Act $\\
\textbf{(RSi)} $p \stackrel{\epsilon}{\Rightarrow}_F|p'$ implies $\exists q'.q \stackrel{\epsilon}{\Rightarrow}_F|q'\; \textrm{and}\;(p',q') \in {\mathcal R}$;\\
\textbf{(RSiii)} $p \stackrel{a}{\Rightarrow}_F|p'$ and $p,q$ stable implies $\exists q'.q \stackrel{a}{\Rightarrow}_F|q'\; \textrm{and}\;(p',q') \in {\mathcal R}$;\\
\textbf{(RSiv)} $p\notin F $ and $p,q$ stable implies ${\mathcal I}(p)={\mathcal I}(q)$.

We write $p \sqsubseteq_{ALT} q$ if there exists an alternative ready simulation relation $\mathcal R$ with $(p,q) \in \mathcal R$.
\end{mydefn}
\begin{mydefn}[ALT up to $\underset{\thicksim}{\sqsubset}_{RS}$]\label{D:ALT_RS_UPTO}
A relation ${\mathcal R} \subseteq T(\Sigma_{\text{CLL}_R})\times T(\Sigma_{\text{CLL}_R})$ is an alternative ready simulation relation up to $\underset{\thicksim}{\sqsubset}_{RS}$, if for any $(p,q) \in {\mathcal R}$ and $a \in Act $\\
\textbf{(ALT-upto-1)} $p \fsactions{\epsilon}p'$ implies $\exists q'.q \fsactions{\epsilon}q'$ and $p' \SRS {\mathcal R}\SRS q'$;\\
\textbf{(ALT-upto-2)} $p \fsactions{a}p'$ and $p,q$ stable implies $\exists q'.q \fsactions{a}q'$ and $p'\SRS{\mathcal R}\SRS q'$;\\
\textbf{(ALT-upto-3)} $p\notin F $ and $p,q$ stable implies ${\mathcal I}(p)={\mathcal I}(q)$.
\end{mydefn}

It has been proved that $\RS = \sqsubseteq_{ALT}$ \cite{Luttgen10} and if $\mathcal R$ is an alternative ready simulation relation up to $\SRS$, then ${\mathcal R} \subseteq \RS$ \cite{Zhang14}. With these results, we could prove the next lemma.

\begin{lemma}
  Let $X$ be strongly guarded in $t_X$. If $p \RS t_X\{p/X\}$ then $t_X\{p/X\} \RS \langle X|X=t_X \rangle$.
\end{lemma}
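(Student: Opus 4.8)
The plan is to exhibit a concrete relation containing the pair $(t_X\{p/X\}, \langle X|X=t_X\rangle)$ and show it is an alternative ready simulation relation up to $\SRS$; by the quoted result of \cite{Zhang14} this yields $t_X\{p/X\} \RS \langle X|X=t_X\rangle$. The natural candidate is
\[\mathcal R \triangleq \{(C_Y\{t_X\{p/X\}/Y\},\, C_Y\{\langle X|X=t_X\rangle/Y\}) : C_Y \text{ a context}\},\]
so that the required pair is obtained by taking $C_Y \equiv Y$. The intuition is that replacing the "unfolded" process $t_X\{p/X\}$ by the recursive process $\langle X|X=t_X\rangle$ inside any context never loses behaviour, because $p \RS t_X\{p/X\}$ means $p$ is already at least as constrained as its one-step unfolding, and strong guardedness of $X$ prevents the substituted occurrences from being "peeled off" in a single step.

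First I would record the transition-matching machinery. Lemma~\ref{L:precongruence_F} already gives (\ref{L:precongruence_F}.1): $C_Y\{t_X\{p/X\}/Y\}\action{\alpha}$ iff $C_Y\{\langle X|X=t_X\rangle/Y\}\action{\alpha}$, hence the two sides have the same ready sets and the same stability status, which will discharge (ALT-upto-3) directly and also handle the "$p,q$ stable" hypotheses. For (ALT-upto-1), given $C_Y\{t_X\{p/X\}/Y\}\fsactions{\epsilon}s$, I would apply Lemma~\ref{L:MULTI_TAU_GF_STABLE} to decompose this into $s\equiv C_{Y,\widetilde W}'\{t_X\{p/X\}/Y, \widetilde{s_W'}/\widetilde W\}$ with each $t_X\{p/X\}\sactions{\tau}s_W'$; then, since $X$ is strongly guarded in $t_X$, each such $\tau$-sequence is mimicked by $t_X\{\langle X|X=t_X\rangle/X\}\actions{\tau} t_X'^W\{\langle X|X=t_X\rangle/X\}$ with $s_W'\equiv t_X'^W\{p/X\}$ via Lemmas~\ref{L:ONE_ACTION_VISIBLE_GUARDED} and \ref{L:MULTI_TAU_GF_STABLE} (this is exactly the computation already carried out in Case~2 of Lemma~\ref{L:precongruence_F}'s proof). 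Using Rule $Ra_{16}$ to fold back to $\langle X|X=t_X\rangle$, I get $C_Y\{\langle X|X=t_X\rangle/Y\}\fsactions{\epsilon} u$ where $u\equiv C_{Y,\widetilde W}'\{\langle X|X=t_X\rangle/Y, \widetilde{t_X'^W\{\langle X|X=t_X\rangle/X\}}/\widetilde W\}$; that the intermediate states stay out of $F$ follows from Lemma~\ref{L:precongruence_F}. Now $s$ and $u$ are \emph{not} literally of the form $(C\{t_X\{p/X\}/Y\}, C\{\langle X|X=t_X\rangle/Y\})$, because $s$ contains $s_W'\equiv t_X'^W\{p/X\}$ rather than $t_X'^W\{t_X\{p/X\}/X\}$; this is precisely where the "up to $\SRS$" slack is used: from $p \RS t_X\{p/X\}$ and precongruence (Theorem~\ref{L:precongruence}) we get $t_X'^W\{p/X\} \SRS t_X'^W\{t_X\{p/X\}/X\}$ (after checking the stability hypothesis needed for $\SRS$ via strong guardedness / Lemma~\ref{L:MULTI_TAU_GF_STABLE}(3)), hence $s \SRS C_{Y,\widetilde W}'\{t_X\{p/X\}/Y,\widetilde{t_X'^W\{t_X\{p/X\}/X\}}/\widetilde W\} \,\mathcal R\, u$, giving $s \SRS \mathcal R \SRS u$ (the right $\SRS$ being reflexivity). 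The case (ALT-upto-2), for a visible action $a$ from a stable state, is entirely analogous using Lemma~\ref{L:ONE_ACTION_VISIBLE} in place of Lemma~\ref{L:MULTI_TAU_GF_STABLE}.

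The main obstacle I anticipate is bookkeeping around the $\SRS$-matching rather than any genuinely new idea: I must confirm that $\SRS$ composes on the left with $\mathcal R$ in the sense required by Definition~\ref{D:ALT_RS_UPTO} (stability of all intermediate processes, and that $t_X'^W$ inherits strong guardedness of $X$ so that the precongruence step applies to the right object), and that the derivatives produced by Lemmas~\ref{L:ONE_ACTION_TAU}--\ref{L:MULTI_TAU_GF_STABLE} can always be written with the recursion folded in via $Ra_{16}$ — the subtle point being that strong guardedness guarantees the $\widetilde Y$-components in Lemma~\ref{L:MULTI_TAU_GF_STABLE} vanish for the inner substitution while a possibly nonempty $\widetilde W$ survives for the outer context $C_Y$. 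Once these routine verifications are in place, $\mathcal R$ is an alternative ready simulation up to $\SRS$, so $\mathcal R \subseteq \RS$, and specializing $C_Y\equiv Y$ yields $t_X\{p/X\} \RS \langle X|X=t_X\rangle$.
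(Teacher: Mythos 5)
Your overall strategy coincides with the paper's: the same relation $\mathcal R = \{(B_Y\{t_X\{p/X\}/Y\},B_Y\{\langle X|X=t_X\rangle/Y\})\}$, the same appeal to the up-to-$\SRS$ technique, and your treatment of (ALT-upto-3) and (ALT-upto-1) — decomposition via Lemma~\ref{L:MULTI_TAU_GF_STABLE}, folding via Lemma~\ref{L:ONE_ACTION_VISIBLE_GUARDED}, consistency of the target via Lemma~\ref{L:precongruence_F}, and the insertion of $\SRS$ through Theorem~\ref{L:precongruence} to repair the mismatch between $t_X'^W\{p/X\}$ and $t_X'^W\{t_X\{p/X\}/X\}$ — is essentially the paper's argument.

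The gap is in (ALT-upto-2), which you dismiss as ``entirely analogous using Lemma~\ref{L:ONE_ACTION_VISIBLE} in place of Lemma~\ref{L:MULTI_TAU_GF_STABLE}.'' The obligation there is to match the \emph{weak} transition $C_Y\{t_X\{p/X\}/Y\}\fsactions{a}s$, which from a stable state decomposes as $\faction{a}r\fsactions{\epsilon}s$; Lemma~\ref{L:ONE_ACTION_VISIBLE} only covers the single strong $a$-step, and after that step the reached state $r\equiv C_{Y,\widetilde Z}'\{t_X\{p/X\}/Y,\widetilde{t_X'^Z\{p/X\}}/\widetilde Z\}$ contains \emph{bare} $p$ in the $\widetilde Z$-positions, so the pair $(r,v)$ is not of the shape required by $\mathcal R$ and the residual $\epsilon$-run from $r$ cannot be folded syntactically the way you folded in (ALT-upto-1) — the $\tau$-moves of the components $t_X'^Z\{p/X\}$ are moves of $p$, which is related to $t_X\{p/X\}$ only semantically via $\RS$, not syntactically. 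The paper closes this with an extra maneuver you do not mention: set $u\equiv C_{Y,\widetilde Z}'\{t_X\{p/X\}/Y,\widetilde{t_X'^Z\{t_X\{p/X\}/X\}}/\widetilde Z\}$, observe $r\RS u$ by Theorem~\ref{L:precongruence}, transfer $r\fsactions{\epsilon}s$ along this $\RS$ to get $u\fsactions{\epsilon}t$ with $s\SRS t$, then note $(u,v)\in\mathcal R$ and invoke the already-established clause (ALT-upto-1) to obtain $v\fsactions{\epsilon}t'$ with $t\SRS\mathcal R\SRS t'$, and finally compose (using Lemma~\ref{L:precongruence_F} to conclude $C_Y\{\langle X|X=t_X\rangle/Y\}\fsactions{a}t'$). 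Without this detour the visible-action case does not go through as stated, so you should add it; the rest of your proposal stands.
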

\begin{proof}
  Set ${\mathcal R} \triangleq \{(B_Y\{t_X\{p/X\}/Y\},B_Y\{\langle X|X=t_X \rangle/Y\})\}$. It is sufficient to prove that $\mathcal R$ is an alternative ready simulation relation up to $\SRS$.
  Let $(C_Y\{t_X\{p/X\}/Y\},C_Y\{\langle X|X=t_X \rangle/Y\}) \in \mathcal R$.
  By Lemma~\ref{L:ONE_ACTION_TAU}, \ref{L:ONE_ACTION_VISIBLE} and \ref{L:ONE_ACTION_VISIBLE_GUARDED}, (ALT-upto-3) holds clearly.
  Next we handle the other two clauses.

  \textbf{(ALT-upto-1)} Assume $C_Y\{t_X\{p/X\}/Y\} \fsactions{\epsilon} s$. For this transition, by Lemma~\ref{L:MULTI_TAU_GF_STABLE}, $s \equiv C_{Y,\widetilde{Z}}'\{t_X\{p/X\}/Y,\widetilde{s_Z'}/\widetilde{Z}\}$ for some stable $C_{Y,\widetilde{Z}}'$ and $\widetilde{s_Z'}$ such that $t_X\{p/X\} \sactions{\tau} s_Z'$ \everyR{Z}.
  Further, for each $t_X\{p/X\} \sactions{\tau} s_Z'$, since $X$ is strongly guarded in $t_X$, there exists stable $t_X'^Z$  with strongly guarded $X$ such that $s_Z' \equiv t_X'^Z\{p/X\}$ and $t_X\{\langle X|X=t_X\rangle /X\} \sactions{\tau} t_X'^Z\{\langle X|X=t_X \rangle /X\}$.
  So $\langle X|X=t_X\rangle \sactions{\tau} t_X'^Z\{\langle X|X=t_X \rangle /X\}$ \everyR{Z} and hence $C_Y\{\langle X|X=t_X \rangle /Y\} \sactions{\epsilon} C_{Y,\widetilde{Z}}'\{\langle X|X=t_X \rangle /Y, \widetilde{t_X'^Z\{\langle X|X=t_X \rangle /X\}} /\widetilde{Z}\} \equiv u$.
  Since $s \equiv C_{Y,\widetilde{Z}}'\{t_X\{p/X\}/Y, \widetilde{t_X'^Z\{p /X\}} /\widetilde{Z}\} \notin F$ and $p \RS t_X\{p/X\}$, by Lemma~\ref{L:ONE_ACTION_VISIBLE_GUARDED} and Theorem~\ref{L:precongruence}, we obtain $s \SRS C_{Y,\widetilde{Z}}'\{t_X\{p/X\}/Y, \widetilde{t_X'^Z\{t_X\{p/X\} /X\}} /\widetilde{Z}\} \notin F$, which implies $u \notin F$ by Lemma~\ref{L:precongruence_F}.
  Clearly $C_Y\{\langle X|X=t_X\rangle/Y\} \fsactions{\epsilon} u$ by Lemma~\ref{L:LLTS}, and $s \SRS {\mathcal R} u$, as desired.

  \textbf{(ALT-upto-2)} Assume that $C_Y\{t_X\{p/X\}/Y\}$ and $C_Y\{\langle X|X=t_X\rangle /Y\}$ are stable and $C_Y\{t_X\{p/X\}/Y\} \fsactions{a} s$.
  Then $C_Y\{t_X\{p/X\}/Y\} \faction{a} r\fsactions{\epsilon} s$ for some $r$.
  For the $a$-transition, by Lemma~\ref{L:ONE_ACTION_VISIBLE}, $r \equiv C_{Y,\widetilde{Z}}'\{t_X\{p/X\}/Y,\widetilde{r_Z'}/\widetilde{Z}\}$ for some $C_{Y,\widetilde{Z}}'$ and $\widetilde{r_Z'}$ such that $t_X\{p/X\} \action{a} r_Z'$ \everyR{Z}.
  Since $X$ is strongly guarded in $t_X$, for each $t_X\{p/X\} \action{a} r_Z'$, by Lemma~\ref{L:ONE_ACTION_VISIBLE_GUARDED}, there exists $t_X'^Z$ such that $r_Z' \equiv t_X'^Z\{p/X\}$ and $t_X\{\langle X|X=t_X\rangle /X\} \action{a} t_X'^Z\{\langle X|X=t_X \rangle /X\}$.
  Then $ \langle X|X=t_X\rangle   \action{a} t_X'^Z\{\langle X|X=t_X \rangle /X\}$ \everyR{Z} and hence $C_Y\{\langle X|X=t_X \rangle/Y\} \action{a} C_{Y,\widetilde{Z}}'\{\langle X|X=t_X \rangle /Y, \widetilde{t_X'^Z\{\langle X|X=t_X \rangle /X\}}/\widetilde{Z} \} \equiv v$.
  Let $u \equiv C_{Y,\widetilde{Z}}'\{t_X\{p/X\} /Y, \widetilde{t_X'^Z\{t_X\{p/X\} /X\}}/\widetilde{Z} \}$.
  Since $p \RS t_X\{p/X\}$, by Theorem~\ref{L:precongruence}, we have $r \equiv C_{Y,\widetilde{Z}}'\{t_X\{p/X\} /Y, \widetilde{t_X'^Z\{p /X\}}/\widetilde{Z} \} \RS u$.
  Hence since $r \fsactions{\epsilon} s$, we have $u \fsactions{\epsilon} t$ and $s \SRS t$ for some $t$.
  Since $u {\mathcal R} v$, by (ALT-upto-1), $v \fsactions{\epsilon} t'$ for some $t'$ such that $t \SRS {\mathcal R} \SRS t'$.
  Therefore, by Lemma~\ref{L:precongruence_F}, $C_Y\{\langle X|X=t_X \rangle /Y\} \fsactions{a} t'$ and $s \SRS t \SRS {\mathcal R} \SRS t'$.
\end{proof}

Now with the previous lemma, it is not difficult to get

\begin{theorem}
  For any equation $X=_{RS}t_X$ such that $X$ is strongly guarded in $t_X$, if consistent solution exists then $\langle X|X=t_X\rangle$ is the greatest consistent solution.
\end{theorem}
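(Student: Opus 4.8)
The plan is to derive the theorem from the preceding lemma together with the "Unique solution" fixed-point properties already available for $\langle X|X=t_X\rangle$. First I would observe that the hypothesis "consistent solution exists" gives some process $p \notin F$ with $p =_{RS} t_X\{p/X\}$; in particular $p \sqsubseteq_{RS} t_X\{p/X\}$ and $t_X\{p/X\} \sqsubseteq_{RS} p$. I would then argue that $\langle X|X=t_X\rangle$ is itself a consistent solution: consistency, i.e. $\langle X|X=t_X\rangle \notin F$, follows from Lemma~\ref{L:precongruence_F} applied with $C_Y \equiv Y$ (so $C_Y\{t_X\{p/X\}/Y\} \equiv t_X\{p/X\} \notin F$ yields $\langle X|X=t_X\rangle \notin F$); and the fixed-point equation $\langle X|X=t_X\rangle =_{RS} t_X\{\langle X|X=t_X\rangle/X\}$ holds because $\langle t_X|X=t_X\rangle \equiv t_X\{\langle X|X=t_X\rangle/X\}$ and $\langle X|X=t_X\rangle$ and $\langle t_X|X=t_X\rangle$ have the same transitions and $F$-status by Rules $Ra_{16}$, $Rp_{14}$, $Rp_{15}$ and Lemma~\ref{L:F_NORMAL}(6). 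So the set of consistent solutions is nonempty and contains $\langle X|X=t_X\rangle$.

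Next I would establish that $\langle X|X=t_X\rangle$ is an upper bound of every consistent solution w.r.t.\ $\sqsubseteq_{RS}$. Let $p \notin F$ with $p =_{RS} t_X\{p/X\}$; I must show $p \sqsubseteq_{RS} \langle X|X=t_X\rangle$. From $p =_{RS} t_X\{p/X\}$ we have both $p \sqsubseteq_{RS} t_X\{p/X\}$ and $t_X\{p/X\} \sqsubseteq_{RS} p$. The preceding lemma, applied with the hypothesis $p \sqsubseteq_{RS} t_X\{p/X\}$, delivers $t_X\{p/X\} \sqsubseteq_{RS} \langle X|X=t_X\rangle$. Composing with $p \sqsubseteq_{RS} t_X\{p/X\}$ and using transitivity of $\sqsubseteq_{RS}$ (noted in Definition~\ref{D:RS}) gives $p \sqsubseteq_{RS} \langle X|X=t_X\rangle$, as required. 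Combined with the previous paragraph, $\langle X|X=t_X\rangle$ is a consistent solution that dominates all consistent solutions, hence is the greatest one.

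The only genuinely delicate point is making the bookkeeping for "$\langle X|X=t_X\rangle$ is a solution" airtight: one needs that $\langle X|X=t_X\rangle \approx_{RS} t_X\{\langle X|X=t_X\rangle/X\}$ and not merely that they are weakly ready-simulation-equivalent through an intermediate $\tau$-transition — but since $\langle X|X=t_X\rangle \stackrel{\alpha}{\rightarrow} y$ iff $\langle t_X|X=t_X\rangle \stackrel{\alpha}{\rightarrow} y$ and $\langle X|X=t_X\rangle \in F$ iff $\langle t_X|X=t_X\rangle \in F$ (Lemma~\ref{L:F_NORMAL}(6)), the identity relation extended by this pair is trivially a stable-or-not ready simulation both ways, so $=_{RS}$ holds. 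Everything else is immediate from the cited results, so I expect the main obstacle to be purely expository: stating precisely how Lemma~\ref{L:precongruence_F} is invoked for consistency and how the preceding lemma plus transitivity yield the upper-bound claim, without re-deriving the congruence machinery. Thus the proof is short:

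\begin{proof}
Assume a consistent solution exists, say $p \notin F$ with $p =_{RS} t_X\{p/X\}$; then $p \sqsubseteq_{RS} t_X\{p/X\}$ and $t_X\{p/X\} \sqsubseteq_{RS} p$. By Rules $Ra_{16}$, $Rp_{14}$, $Rp_{15}$ and Lemma~\ref{L:F_NORMAL}(6), $\langle X|X=t_X\rangle$ and $t_X\{\langle X|X=t_X\rangle/X\} \equiv \langle t_X|X=t_X\rangle$ have identical transitions and identical $F$-status, so $\langle X|X=t_X\rangle =_{RS} t_X\{\langle X|X=t_X\rangle/X\}$. Applying Lemma~\ref{L:precongruence_F} with $C_Y \equiv Y$ to $t_X\{p/X\} \notin F$ and $p \sqsubseteq_{RS} t_X\{p/X\}$ gives $\langle X|X=t_X\rangle \notin F$. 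Hence $\langle X|X=t_X\rangle$ is a consistent solution. Now let $p \notin F$ be any consistent solution. Since $p \sqsubseteq_{RS} t_X\{p/X\}$, the preceding lemma yields $t_X\{p/X\} \sqsubseteq_{RS} \langle X|X=t_X\rangle$, and by transitivity $p \sqsubseteq_{RS} t_X\{p/X\} \sqsubseteq_{RS} \langle X|X=t_X\rangle$. Therefore $\langle X|X=t_X\rangle$ is the greatest consistent solution w.r.t.\ $\sqsubseteq_{RS}$.
\end{proof}
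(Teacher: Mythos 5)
Your proof is correct and is essentially the paper's intended argument (the paper leaves the theorem as an easy consequence of the preceding lemma): consistency of $\langle X|X=t_X\rangle$ via Lemma~\ref{L:precongruence_F} with $C_Y \equiv Y$, the fixed-point property $\langle X|X=t_X\rangle =_{RS} t_X\{\langle X|X=t_X\rangle/X\}$ from rule $Ra_{16}$ and Lemma~\ref{L:F_NORMAL}(6), and the upper-bound claim from the preceding lemma plus transitivity of $\sqsubseteq_{RS}$. The only step you leave implicit is that $t_X\{p/X\}\notin F$ follows from $p\notin F$ and $p\sqsubseteq_{RS}t_X\{p/X\}$ (via (LTS2) and Definition~\ref{D:RS}), which is immediate.
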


\section{Encoding ACTL in $\text{CLL}_R$}

In \cite{Luttgen11}, {\they} introduce a fragment of action-based CTL \cite{Nicola90} (ACTL for short), embed it into LLTS and present the desired compatibility result between logical satisfaction and $\sqsubseteq_{RS}$.
In this section, we recall their ACTL and encode it in $\text{CLL}_R$ under the hypothesis that $Act$ is finite.

\begin{mydefn}
  The action-based CTL is defined by BNF:
  \[\phi ::= tt \mid ff \mid en(a) \mid dis(a) \mid \phi \vee \phi \mid \phi \wedge \phi \mid [a]\phi \mid \mathcal{A}\phi \mid \phi \mathcal{W} \phi\]
  where $a \in Act$.
  $T(\Sigma_{\text{ACTL}})$ denotes the set of all terms in ACTL.
\end{mydefn}
$en(a)$ and $dis(a)$ denote enabledness and disabledness of action $a$ resp.
$[a]$, $\mathcal{A}$ and $\mathcal{W}$ are usual \emph{next}, \emph{always} and \emph{weak until} operators.
For more motivations and intuitions about these operators, the reader may refer to \cite{Luttgen11}.

Before encoding formulas of ACTL in $\text{CLL}_R$, we introduce some useful notations.
Given $n$ terms $t_i(0 \leq i \leq n-1)$ in $T(\Sigma_{\text{CLL}_R})$, the general external choice $\underset{i<n}{\square}t_i$ and disjunction $\underset{i<n}{\bigvee}t_i$ are defined recursively as:

$\underset{i<0}\square t_i \triangleq 0,
   \underset{i<1}\square t_i \triangleq t_0,\;\text{and}\;
   \underset{i<k+1}\square t_i \triangleq (\underset{i<k}\square t_i) \Box t_k \;\text{for}\; k \geq 1;$

$\underset{i<1}\bigvee t_i \triangleq t_0,\;\text{and}\;
    \underset{i<k+1}\bigvee t_i \triangleq (\underset{i<k}\bigvee t_i) \vee t_k\; \text{for} \;k\geq 1.$

\noindent The general conjunction $\underset{i<n}{\bigwedge}t_i$ is defines similarly as disjunction.

Given a term $\phi$ in $T(\Sigma_{\text{ACTL}})$, the encoding of $\phi$, denoted by $\mathcal{E}(\phi)$, is defined as:

\noindent
 $\begin{array}{ll}
     \mathcal{E}(tt)   \triangleq   \langle X|X= \underset{A \subseteq Act}{\bigvee}\underset{a\in A}{\square}a.X \rangle
   &  \qquad  \mathcal{E}(ff)   \triangleq  \bot \\
      \mathcal{E}(en(a))   \triangleq   \underset{a \in A \subseteq Act}{\bigvee}\underset{b\in A}{\square}b.\mathcal{E}(tt)
   &  \qquad \mathcal{E}(dis(a))  \triangleq  \underset{a \notin A \subseteq Act}{\bigvee}\underset{b\in A}{\square}b.\mathcal{E}(tt)\\
      \mathcal{E}([a]\phi)   \triangleq   \lceil a\rceil(\mathcal{E}(\phi)) &
       \qquad \mathcal{E}(\phi_1 \vee \phi_2)  \triangleq  \mathcal{E}(\phi_1) \vee \mathcal{E}(\phi_2) \\
      \mathcal{E}(\phi_1 \wedge \phi_2)   \triangleq   \mathcal{E}(\phi_1) \wedge \mathcal{E}(\phi_2)
      & \qquad \mathcal{E}(\mathcal{A}\phi)   \triangleq  \langle X| X= \mathcal{E}(\phi) \wedge (\underset{a \in Act}{\bigwedge}\lceil a\rceil X ) \rangle
  \end{array}$

\noindent  $   \;\mathcal{E}(\phi_1 \mathcal{W} \phi_2)   \triangleq  \langle X| X= \mathcal{E}(\phi_2)  \vee ( \mathcal{E}(\phi_1) \wedge (\underset{a \in Act}{\bigwedge}\lceil a \rceil (X))\rangle$

\noindent where $   \lceil a\rceil \triangleq \lambda x. ( \underset{a\in A \subseteq Act}{\bigvee}((\underset{b \in A-\{a\}}{\square}b.\mathcal{E}(tt) )\Box a.x)) \vee (\underset{a\notin A \subseteq Act}{\bigvee}(\underset{b \in A}{\square}b.\mathcal{E}(tt)))$, intuitively, $\lceil a\rceil$ says ``along $a$-transition, it is necessary that \dots''.

Therefore, if we want to check a specification $p \in T(\Sigma_{\text{CLL}_R})$ satisfies some desired property $\phi \in T(\Sigma_{\text{ACTL}})$, we only check whether $p \RS {\mathcal E}(\phi)$ or $p \wedge {\mathcal E}(\phi)=_{RS} \bot$ holds.

\begin{theorem}
  $p\vDash \phi$ iff $p \RS \mathcal{E}(\phi)$.
\end{theorem}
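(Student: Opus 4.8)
The plan is to prove both implications of $p\vDash\phi\iff p\RS\mathcal{E}(\phi)$ at once, by structural induction on $\phi$, reducing each ACTL connective to an algebraic property of $\RS$ together with the very specific shape of the encoded process. Before the induction I would dispose of the inconsistent case once and for all: $p\in F$ holds iff $p$ has no stable $F$-free derivative (one direction is immediate, the other is (LTS2)), and then $p$ satisfies every ACTL formula and $p\RS q$ vacuously for every $q$; so in each case below I may assume $p\notin F$, which by (LTS2) gives at least one $p'$ with $p\fsactions{\epsilon}p'$ and lets me argue about the family $\{p':p\fsactions{\epsilon}p'\}$ of stable $F$-free derivatives and their ready sets $\mathcal{I}(p')$. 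Finiteness of $Act$ is used throughout: it is exactly what makes the general choices $\underset{a\in A}{\square}$, disjunctions $\underset{A\subseteq Act}{\bigvee}$ and conjunctions $\underset{a\in Act}{\bigwedge}$ in the encoding genuine (finite) $\text{CLL}_R$-terms.

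I would first record three structural facts that carry the non-recursive cases. (i) $\mathcal{E}(tt)\notin F$ — the disjunct for $A=\emptyset$ provides the stable $F$-free derivative $0$, so by Lemma~\ref{L:F_NORMAL}(1) neither $Rp_{14}$ nor $Rp_{15}$ can force it into $F$ — and $p\RS\mathcal{E}(tt)$ for \emph{every} process $p$, i.e. $\mathcal{E}(tt)$ is $\RS$-greatest: a stable $F$-free derivative $p'$ with $\mathcal{I}(p')=A$ is matched by $\underset{a\in A}{\square}a.\mathcal{E}(tt)$, which has the same ready set, and every $a$-move goes back to $\mathcal{E}(tt)$. (ii) $p\RS\bot\iff p\in F$, since $\bot\in F$ has no stable $F$-free derivative. (iii) A shape lemma for $\lceil a\rceil q$ with $q\notin F$: its stable $F$-free derivatives are exactly $(\underset{b\in A\setminus\{a\}}{\square}b.\mathcal{E}(tt))\Box a.q$ for $a\in A\subseteq Act$ and $\underset{b\in A}{\square}b.\mathcal{E}(tt)$ for $a\notin A\subseteq Act$, the $a$-branch of the first kind leading to $q$ and all other branches to $\mathcal{E}(tt)$. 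With these, together with (a) the fact that conjunction is the meet for $\RS$, so $p\RS q_1\wedge q_2\iff(p\RS q_1$ and $p\RS q_2)$ (from \cite{Luttgen10}, via transitivity of $\RS$), and (b) the observation that the stable $F$-free derivatives of $\mathcal{E}(\phi_1)\vee\mathcal{E}(\phi_2)$ are precisely the union of those of the two disjuncts, whence $p\RS\mathcal{E}(\phi_1)\vee\mathcal{E}(\phi_2)\iff(p\RS\mathcal{E}(\phi_1)$ or $p\RS\mathcal{E}(\phi_2))$, the cases $tt$, $ff$, $en(a)$, $dis(a)$, $[a]\phi$, $\phi_1\vee\phi_2$ and $\phi_1\wedge\phi_2$ all fall out by comparing, for each stable $F$-free derivative $p'$ of $p$, its ready set $\mathcal{I}(p')$ against the subset-indexed disjuncts of the encoding, invoking the induction hypothesis on the proper subformulas (for $[a]\phi$, on the $\stackrel{a}{\Rightarrow}_F|$-derivatives). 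The up-to technique of Definition~\ref{D:ALT_RS_UPTO} is convenient for assembling these ready simulations.

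The equivalence in (b) is false for arbitrary processes, so it is worth flagging: for the $\phi_1\vee\phi_2$ case to go through, the disjunction clause of the \cite{Luttgen11} satisfaction relation must likewise quantify the choice of disjunct over the stable $F$-free derivatives of $p$ (so that $p\vDash\phi_1\vee\phi_2$ iff every such derivative satisfies $\phi_1$ or $\phi_2$); and $p\vDash en(a)$, $p\vDash dis(a)$ must mean $a\in\mathcal{I}(p')$, resp. $a\notin\mathcal{I}(p')$, for every stable $F$-free derivative $p'$ of $p$, and $p\vDash ff$ must mean $p\in F$. I would make these readings of $\vDash$ explicit (they are forced by the encoding) and check them against \cite{Luttgen11} before running the cases above.

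The remaining cases $\mathcal{A}\phi$ and $\phi_1\mathcal{W}\phi_2$ are where Section~3 is used and where the main obstacle lies. In $\mathcal{E}(\mathcal{A}\phi)=\langle X\mid X=\mathcal{E}(\phi)\wedge(\underset{a\in Act}{\bigwedge}\lceil a\rceil X)\rangle$ the variable $X$ occurs only inside $\lceil a\rceil X$, hence only under an $a.(-)$ prefix, so $X$ is strongly guarded in the body (though not conjunction-free); by the greatest-solution theorem of Section~3, $\mathcal{E}(\mathcal{A}\phi)$ is the greatest consistent solution of $X=_{RS}\mathcal{E}(\phi)\wedge\bigwedge_a\lceil a\rceil X$ — the sub-case $\mathcal{E}(\mathcal{A}\phi)\in F$ being handled exactly as $ff$ — and in particular $\mathcal{E}(\mathcal{A}\phi)=_{RS}\mathcal{E}(\phi)\wedge\bigwedge_a\lceil a\rceil\mathcal{E}(\mathcal{A}\phi)$. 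On the logic side ``always'' is a greatest fixpoint: $p\vDash\mathcal{A}\phi$ iff $p\vDash\phi$ and every $\stackrel{a}{\Rightarrow}_F|$-derivative of every stable $F$-free derivative of $p$ again satisfies $\mathcal{A}\phi$. For $p\RS\mathcal{E}(\mathcal{A}\phi)\Rightarrow p\vDash\mathcal{A}\phi$ I would show $\{p:p\RS\mathcal{E}(\mathcal{A}\phi)\}$ is closed under this rule: peel the meet to get $p\RS\mathcal{E}(\phi)$ (hence $p\vDash\phi$ by the induction hypothesis, legitimate since $\phi$ is a proper subformula) and $p\RS\lceil a\rceil\mathcal{E}(\mathcal{A}\phi)$ for each $a$ (hence, by the shape lemma, every relevant $a$-derivative is $\RS\mathcal{E}(\mathcal{A}\phi)$), then close coinductively; greatest-fixpointness gives containment in $\{p:p\vDash\mathcal{A}\phi\}$. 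For the converse I would exhibit a single alternative ready simulation up to $\SRS$ witnessing $p\RS\mathcal{E}(\mathcal{A}\phi)$ for all $p\vDash\mathcal{A}\phi$ simultaneously, again using the already-proved forward direction for $\phi$ and the shape lemma — equivalently, $\{p:p\vDash\mathcal{A}\phi\}$ is closed under the operator whose greatest solution is $\mathcal{E}(\mathcal{A}\phi)$, so each member is $\RS$ that greatest solution. The case $\phi_1\mathcal{W}\phi_2$ is completely parallel, weak-until being again a greatest fixpoint and $X$ being strongly guarded in $\mathcal{E}(\phi_2)\vee(\mathcal{E}(\phi_1)\wedge\bigwedge_a\lceil a\rceil X)$. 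The real difficulty is keeping the two recursions apart: the induction on $\phi$ must never be invoked at the \emph{same} formula $\mathcal{A}\phi$ or $\phi_1\mathcal{W}\phi_2$, so the self-reference has to be discharged by a coinduction on the process side that is mirrored, via Section~3, by the greatest-fixpoint reading on the logic side — and one must verify that this correspondence survives the interaction with inconsistency, since any derivative driven into $F$ both satisfies every ACTL formula and lies $\RS$-below everything, a fact that must be used consistently in both the fixpoint rule and the ready-simulation construction.
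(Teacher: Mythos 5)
The paper states this theorem without any proof, so there is nothing of the authors' to compare your plan against; judged on its own merits, your outline is sensible for the non-recursive connectives but has a genuine gap exactly where the real work lies, namely the cases $\mathcal{A}\phi$ and $\phi_1\mathcal{W}\phi_2$. You propose to discharge the direction $p\vDash\mathcal{A}\phi\Rightarrow p\RS\mathcal{E}(\mathcal{A}\phi)$ by saying that $\{p:p\vDash\mathcal{A}\phi\}$ is ``closed under the operator whose greatest solution is $\mathcal{E}(\mathcal{A}\phi)$, so each member is $\RS$ that greatest solution''. Section~3 does not license this step: its coinduction principle applies to an \emph{individual} process $p$ with $p\RS t_X\{p/X\}$, and a process satisfying $\mathcal{A}\phi$ is in general not such a post-fixpoint, because in $t_X\{p/X\}\equiv\mathcal{E}(\phi)\wedge\bigwedge_{a\in Act}\lceil a\rceil p$ every $a$-branch restarts at $p$ itself, whereas $p$ has meanwhile moved to a derivative. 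Concretely, take $\phi=tt$ and $p\equiv a.b.0\;\Box\;a.c.0$ (with $a\neq b\neq c$): by the meet property you yourself invoke, $p\RS t_X\{p/X\}$ would require $p\RS\lceil a\rceil p$; the only stable consistent derivative of $\lceil a\rceil p$ with ready set $\{a\}$ is $0\;\Box\;a.p$, whose unique $a$-successor is $p$, and $b.0\SRS p$ fails by (RS4). Yet $p\vDash\mathcal{A}tt$, so if the theorem is true at all then $p\RS\mathcal{E}(\mathcal{A}tt)$ must hold; hence the claimed ``equivalently'' is false and the greatest-solution theorem cannot carry this direction.

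What would carry it is the other half of your sentence: exhibiting one alternative ready simulation up to $\SRS$ that relates every satisfier of $\mathcal{A}\phi$ (together with all its reachable stable consistent derivatives, which satisfy $\mathcal{A}\phi$ again by the greatest-fixpoint reading of $\vDash$) to suitable derivatives of $\mathcal{E}(\mathcal{A}\phi)$, built from your shape lemma for $\lceil a\rceil$ and the induction hypothesis for $\phi$; but that construction is precisely the content your proposal leaves unexecuted, and it is a fresh coinduction rather than a corollary of Section~3 --- the only fact from recursion you really need there is the unfolding $\mathcal{E}(\mathcal{A}\phi)=_{RS}\mathcal{E}(\phi)\wedge\bigwedge_{a\in Act}\lceil a\rceil\mathcal{E}(\mathcal{A}\phi)$, which already follows from $Ra_{16}$, $Rp_{14}$, $Rp_{15}$ and Lemma~\ref{L:F_NORMAL}(6). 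A secondary soft spot: the paper never defines $\vDash$, and your cases for $ff$, $en(a)$, $dis(a)$ and $\phi_1\vee\phi_2$ only go through after you re-read $\vDash$ so that the disjunct (and the ready-set test) is chosen per stable consistent derivative of $p$; since the theorem is about the satisfaction relation of \cite{Luttgen11}, that reading has to be verified against that paper rather than postulated as ``forced by the encoding''.
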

\section{Conclusions and discussion}

  This paper works on LLTS-oriented process calculus $\text{CLL}_R$ furtherly.
  We show that for any given equation $X=_{RS}t$ such that $X$ is strongly guarded in $t$, $\langle X|X=t \rangle$ is the largest consistent solution w.r.t $\RS$ if consistent solutions exist.
  Moreover we also encode a temporal logic language ACTL in $\text{CLL}_R$.

  For further work, it is very interesting to study the structure of the solution space $\{p:p \RS t_X\{p/X\}\}$ if $X$ is strongly guarded in $t_X$.\\

\noindent \textbf{References}

\end{document}